\newtheorem{theorem}{Theorem}
\newtheorem{lemma}{Lemma}
\newtheorem{definition}{Definition}
\newtheorem{example}{Example}
\newtheorem{corollary}{Corollary}
\newtheorem{remark}{Remark}
\newenvironment{proof}{\quad{\noindent\it Proof:}}{\hfill $\blacksquare$\par}
\begin{document}

\title{Controllability of Multilayer Networked Sampled-data Systems}

\author{Zixuan Yang, Xiaofan Wang, \IEEEmembership{Senior Member, IEEE}, and Lin Wang, \IEEEmembership{Senior Member, IEEE}
\thanks{This work was supported by the National Natural Science Foundation of China (No. 61873167, 61773255), and in part by Shuguang Project of Shanghai Education Commission, Shanghai Talent Development Fund (No. 2021011), Automotive Industry Science
and Technology Development Foundation of Shanghai (No.1904) and the
Strategic Priority Research Program of Chinese Academy of Sciences (No.
XDA27000000). \textit{(Corresponding author: Lin Wang.)}}
\thanks{Zixuan Yang and Lin Wang are with the Department of Automation, Shanghai Jiao Tong University, and Key Laboratory of System Control and Information Processing, Ministry of Education of China, Shanghai 200240, China (e-mail: jennifer\_yang@sjtu.edu.cn; wanglin@sjtu.edu.cn). }
\thanks{Xiaofan Wang is with the Department of Automation, Shanghai Jiao Tong University, Shanghai 200240, China, and also with the Department of Automation, Shanghai University, Shanghai 200444, China (e-mail: xfwang@sjtu.edu.cn). }}



\maketitle

\begin{abstract}
This paper explores the state controllability of multilayer networked sampled-data systems with inter-layer couplings, where zero-order holders (ZOHs) are on the control and transmission channels.
The effects of both single- and multi-rate sampling on controllability of multilayer networked linear time-invariant (LTI) systems are analyzed, with some sufficient and/or necessary controllability conditions derived.
Under specific conditions, the pathological sampling of single node systems could be eliminated by the network structure and inner couplings among different nodes and different layers.
The representative drive-response inter-layer coupling mode is studied, and it reveals that the whole system could be controllable due to the inter-layer couplings even if the response layer is uncontrollable itself.
Moreover, simulated examples show that the modification of sampling rate on local channels could lay a positive or negative effect on the controllability of the whole system.
All the results indicate that the controllability of the multilayer networked sampled-data system is collectively affected by mutually coupled factors.
\end{abstract}

\begin{IEEEkeywords}
Network controllability, multilayer network, sampled-data system, multi-rate sampling, drive-response mode.
\end{IEEEkeywords}

\section{Introduction}
\IEEEPARstart{A}s an important prerequisite of effective system control, controllability has been extensively investigated since the 1960’s, with various rank criteria and graphic properties achieved \cite{xiang2019advances,chen2012optimal,kalman1962canonical,davison1975new,gilbert1963controllability,kobayashi1978controllability,lin1974structural,liu2011controllability,menichetti2014network,liu2012control}.
Recent years have witnessed an unprecedented upsurge of network science and information technology.
As a result, the scale of real-world systems has been expanded, where node states are higher-dimensional and complexly coupled with each other through multiple transmission channels.
For these networked systems, in \cite{zhou2015controllability} controllability conditions are derived based on the transfer function matrix, while an easier-to-verify criterion is developed in \cite{hao2019new} by matrix similarity transformation. 
In \cite{wang2016controllability}, it is claimed that controllability of the networked system is jointly determined by the coupling of network structure and node dynamics.
In \cite{iudice2019node}, a controllability decomposition approach is provided to analyze each node system when the network is not completely controllable.
Research also shows that the controllability of a special type of networked systems, multi-agent systems (MASs), can be decoupled into two independent parts related to single nodes and network topology, respectively \cite{ji2014protocols,ni2013consensus,zhao2020data}.\par

In networked systems, the interactions among different layers increase network complexity and bring new challenges to controllability research.
A target path-cover algorithm based on maximum flow is put forward in \cite{song2019target} to guarantee target controllability of two-layer multiplex networks with minimum control sources.
In \cite{posfai2016controllability}, the underlying mechanisms connecting controllability and time-scale difference between two network layers is identified.
In \cite{su2018controllability}, the two-time-scale system is detached into fast and slow subsystems by the iterative method and approximate approach.
A compositional framework is proposed in \cite{chapman2014controllability} to find out the controllability of composite network-of-network from the corresponding factor networks.
And in \cite{hao2019controllability}, a modified controllability condition is developed, where the diagonalizability requirement for the topology matrix of composite networks is removed.
The collective effects of intra-layer couplings and inter-layer dynamics on controllability of deep-coupling networks is explored in\cite{wu2020controllability}, and furthermore, different coupling modes are considered in \cite{jiang2021controllability} to represent multiple connection structures of the network.\par
\IEEEpubidadjcol

Nowadays, with the development of digital platforms, information is mostly transmitted in the form of sampled data. 
Considering the bandwidth limitation, signal instability, delay and other factors that exist in practice, the controllability of the sampled-data systems is also worth studying.
It is clarified in \cite{chen2012optimal} that the controllability of a single continuous systems can be damaged after pathological periodic sampling.  
The effects of sampling on controllability indices are analyzed in \cite{hagiwara1988controllability}.
In \cite{kreisselmeier1999sampling}, a step of non-equidistant sampling is added to maintain the controllability of systems after sampling, which is further applied to time-varying systems \cite{guo2005systems}.
For multi-rate sampling, the case of different sampling periods on different channels was studied in \cite{pasand2018controllability} with a sufficient controllability condition given.
However, there are few researches on controllability of networked sampled-data system.
Although the sampling controllability of MASs has received attention \cite{ji2014controllability,lu2020sampled}, in reality, more networked systems cannot be decoupled into two independent parts like MASs.
In \cite{my2021}, we have studied the controllability of single-layer networked sampled-data systems and have got some preliminary results.\par

In view of the multilayer structure of real-world networks and system design requirements, this paper studies the controllability of multilayer networked sampled-data systems.
The systems are synthesized by directed, weighted multilayer network topology and multi-dimensional node dynamics. 
On each control channel and inter/intra-layer transmission channel, the information is sampled by a zero-order holder (ZOH).
The controllability verification requires even more intensive calculation due to the more complex network structure, the larger system scale, and the more diverse sampling patterns.
However, the method in this paper is computionally advantageous since it states conditions with respect to decomposed, single-layer, and single-rate systems.
Specifically:
(1) The representation of the multilayer networked sampled-data systems is proposed, with single- and multi-rate sampling patterns considered, respectively. 
(2) Sufficient and/or necessary controllabiltiy conditions are developed, combined with factors of the network topology, external inputs, inner couplings, node dynamics and sampling rates.
(3) The controllability of the response layer is not necessary to the controllability of the whole system due to the inter-layer interactions.
(4) The loss of controllability caused by pathological sampling of single node systems can be eliminated by the multilayer network structure and inter-layer couplings.
(5) The modification of the local sampling rate could damage or enhance the controllability of the whole system.
\par

The rest of this paper is organized as follows.
The notations and model formulation are introduced in Section \ref{sec:pre}.
In Section \ref{sec:gen}, a controllability condition for general multilayer networked sampled-data systems is developed.
Two-layer networked sampled-data systems with drive-response mode are studied in Section \ref{sec:2_dri}, while systems with deeper layers are considered in Section \ref{sec:deep}.
Section \ref{sec:multi} preliminarily inspects the controllability of multilayer networked multi-rate sampled-data systems.
Some useful simulated examples are provided in Section \ref{sec:exa}.
Finally, Section \ref{sec:con} summarizes this paper.

\section{Notations and Model Formulation}\label{sec:pre}
\subsection{Notations}
Denote $\mathbb{R}$, $\mathbb{C}$ and $\mathbb{N}$ fields of real, complex and natural numbers, respectively.
Let $I_n$ denote the identity matrix of size $n\times n$, and by $e_i$ the $i$th unit row vector whose entries are all zero except that the $i$th element is $1$. 
Denote by $diag\{a_1,a_2,...,a_n\}$ the $n\times n$ matrix with diagonal elements $a_1,a_2,...,a_n$, and by $diagblock\{A_1,A_2,...,A_n\}$ the matrix with diagonal block matrices $A_1,A_2,...,A_n$.
The set of all eigenvalues of matrix $A\in\mathbb{R}^{n\times{n}}$ is denoted by $\sigma(A)=\{\lambda_1,...,\lambda_r\}$, $1\leq{r}\leq{n}$, where $r$ is the sum of the geometric multiplicity of all eigenvalues of $A$, and $M(\lambda_i|A)$ denotes the eigenspace of $A$ with respect to $\lambda_i$.
The complex linear span of row vectors $v_1,v_2,...,v_n$ is denoted by $span\{v_1,v_2,...,v_n\}=\{\Sigma_{i=1}^n c_iv_i|c_i\in\mathbb{C}\}$, which is the set of their all complex linear combinations.
Let $A\otimes{B}$ denote the Kronecker product of matrices $A$ and $B$, and $V_1\oplus{V_2}$ the direct sum of space $V_1$ and $V_2$.
Denote $\mathbf{0}$ and $O$ the zero vector and zero matrix, respectively.
Assume that the dimensions of matrices are compatible for algebraic operations if they are not specified.\par

\subsection{Model Formulation}
Consider a general directed and weighted network consisting of $M$ layers with $N$ identical node systems in each layer.
The dynamics of the $i$th node in the $K$th layer are described as:
\begin{equation}
\label{deep_coupling}
\begin{aligned}
    \dot{x}_i^K(t)&=A^Kx_i^K(t)+\sum_{j=1}^N{w_{ij}^KH^KC^Kx_j^K(t)}\\
    &+\sum_{L=1,L\neq{K}}^M\sum_{j=1}^N{d_{ij}^{K,L}P^{K,L}C^Lx_j^L(t)}+\delta_i^KB^Ku_i^K(t)
\end{aligned}
\end{equation}
where $i=1,...,N$, and $K=1,...,M$.
Suppose that $N, M\geq{2}$.
$x_i^K(t)\in\mathbb{R}^n$ and $u_i^K(t)\in\mathbb{R}^p$ denote the state vector and input vector of node $i$ in layer $K$, respectively.
$A^K\in\mathbb{R}^{n\times{n}},B^K\in\mathbb{R}^{n\times{p}},C^K\in\mathbb{R}^{m\times{n}}$ denote the state matrix, input matrix and output matrix of nodes in layer $K$, respectively.
$H^K\in\mathbb{R}^{n\times{m}}$ denotes the inner-couplings among nodes in layer $K$, while $P^{K,L}\in\mathbb{R}^{n\times{m}}$ denotes the inner-couplings between nodes in layer $K$ and nodes in layer $L$.\par

Let $W^K=[w^K_{ij}]\in\mathbb{R}^{N\times{N}}$ denote the intra-layer network topology of layer $K$, where $w_{ii}=0$, and $w^K_{ij}\neq{0}$ ($i\neq{j}$) if there is a link from node $j$ to node $i$ in layer $K$, otherwise $w^K_{ij}=0$.
$D^{K,L}=[d^{K,L}_{ij}]\in\mathbb{R}^{N\times{N}}$ describes the inter-layer coupling topology, where $d^{K,L}_{ij} \neq{0}$ if there is a link from node $j$ in layer $L$ to node $i$ in layer $K$, otherwise $d^{K,L}_{ij}=0$ .
Define $\Delta^K=diag\{\delta^K_1,... ,\delta^K_N\}$, where $\delta^K_i=1$ if node $i$ in layer $K$ is under control; otherwise, $\delta^K_i=0$.
To obtain a compact form, define $\mathbf{X}=((X^1)^\top,...,(X^M)^\top)^\top$ and $\mathbf{U}=((U^1)^\top,...,(U^M)^\top)^\top$ as the total state and control input of the whole network, respectively, where $X^K=((x^K_1)^\top,...,(x^K_N)^\top)^\top$ and $U^K=((u^K_1)^\top,...,(u^K_N)^\top)^\top$ are the state and input of layer $K$, respectively.
Then the multi-layer networked continuous linear time-invariant (CLTI) system can be written as:
\begin{equation}
\label{multi_layer}
    \dot{\mathbf{X}}(t)=\bm{\Phi}\mathbf{X}(t)+\bm{\Psi}\mathbf{U}(t),
\end{equation}
where 
$$\bm{\Phi}=\begin{bmatrix}
\Phi^{1,1} & \cdots & \Phi^{1,M}\\
 \vdots & \ddots &\vdots \\
 \Phi^{M,1} & \cdots & \Phi^{M,M}
\end{bmatrix},
\bm{\Psi}=\begin{bmatrix}
\Psi^{1,1} &  & \\
  & \ddots & \\
  &  & \Psi^{M,M}
\end{bmatrix},$$

\begin{equation}
\label{multi_layer_detail}
\begin{aligned}
&\Phi^{K,K}=I_N\otimes{A^K}+W^K\otimes{H^K}C^K, K=1,2,...,M,\\
&\Phi^{K,L}=D^{K,L}\otimes{P^{K,L}}C^L, K,L=1,2,...,M,K\neq{L},\\
&\Psi^{K,K}=\Delta^K\otimes{B^K}, K=1,2,...,M.
\end{aligned}
\end{equation}

\begin{figure}[!t]
\centering
\subfloat[]{\includegraphics[width=1.45in]{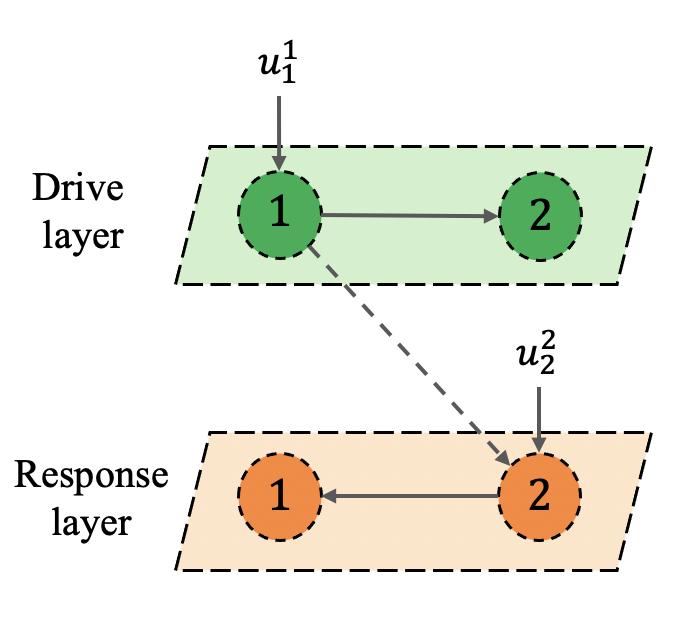}%
\label{fig:CLTI}}
\hfil
\subfloat[]{\includegraphics[width=1.7in]{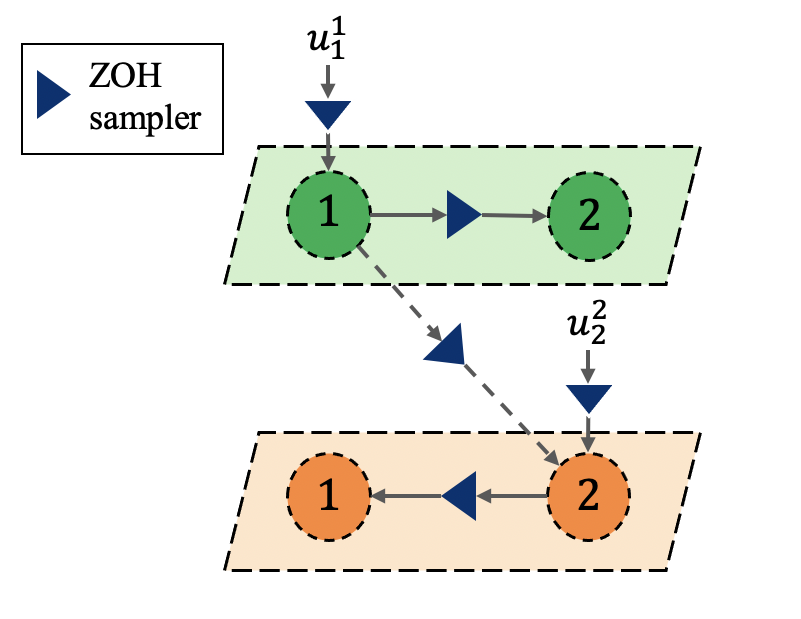}%
\label{fig:sam}}
\caption{An example of a multilayer networked CLTI system and its corresponding sampled-data version. (a) The networked CLTI system. (b) The networked sampled-data system.}
\label{fig:1}
\end{figure}

By performing the ZOHs on control channels and transmission channels simultaneously, the sampled-data version of the multilayer networked system can be obtained, as is shown in Fig.\ref{fig:1}.
Denote the sampling period by $h$, and the system can be written as:
\begin{equation}
\label{sampled-multi-layer}
    \mathbf{X}((k+1)h)=\bm{\Phi}_s\mathbf{X}(kh)+\bm{\Psi}_s\mathbf{U}(kh),
\end{equation}
where $k\in\mathbb{N}$, and
$$\bm{\Phi}_s=\begin{bmatrix}
\Phi_s^{1,1} & \cdots & \Phi_s^{1,M}\\
 \vdots & \ddots &\vdots \\
 \Phi_s^{M,1} & \cdots & \Phi_s^{M,M}
\end{bmatrix},
\bm{\Psi}_s=\begin{bmatrix}
\Psi_s^{1,1} &  & \\
  & \ddots & \\
  &  & \Psi_s^{M,M}
\end{bmatrix},$$
\begin{equation}
\label{sampled-multi-layer-detail}
\begin{aligned}
&\Phi_s^{K,K}=I_N\otimes{e^{A^Kh}}+W^K\otimes\mathcal{H}^K(h), K=1,2,...,M,\\
&\Phi_s^{K,L}=D^{K,L}\otimes\mathcal{P}^{K,L}(h), K,L=1,2,...,M,K\neq{L},\\
&\Psi_s^{K,K}=\Delta^K\otimes\mathcal{B}^K(h), K=1,2,...,M.
\end{aligned}
\end{equation}
Note that $\mathcal{P}^{K,L}(h)=\int_0^h e^{A^K\tau}d\tau{P^{K,L}}C^L$, $\mathcal{H}^K(h)=\int_0^h e^{A^K\tau}d\tau{H^K}C^K$, $\mathcal{B}^K(h)=\int_0^h e^{A^K\tau}d\tau{B^K}$
are denoted for simplicity.

\section{General Multi-layer Networked Sampled-data Systems}\label{sec:gen}
As a premise for subsequent analysis, a more general result is given at first.
Similar to the necessary and sufficient controllability condition for the CLTI system (\ref{multi_layer}-\ref{multi_layer_detail}) in \cite{wu2020controllability}, here a sufficient condition is derived for the multilayer networked sampled-data system (\ref{sampled-multi-layer}-\ref{sampled-multi-layer-detail}).
The lack of necessity is because the reachable subspace and controllable subspace of the discrete-time system are not equivalent.

\begin{theorem}
The multi-layer networked sampled-data system (\ref{sampled-multi-layer}-\ref{sampled-multi-layer-detail}) is controllable if, $\forall{s}\in\mathbb{C}$, the following matrix equations 
\begin{equation}\label{gen_1}
\begin{aligned}
&F^K(sI_n-e^{A^Kh})-(W^K)^\top{F^K}\mathcal{H}^K(h)\\=&\sum_{L=1,L\neq{K}}^M (D^{L,K})^\top{F}^L\mathcal{P}^{L,K}(h),
\end{aligned}    
\end{equation}
\begin{equation}\label{gen_2}
\Delta^K{F}^K\mathcal{B}^K(h)=O
\end{equation}
have a unique solution of $F^K=0$ for every $K=1,...,M$, where $F^K\in\mathbb{C}^{N\times{n}}$.
\end{theorem}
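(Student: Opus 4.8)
The plan is to apply the Popov--Belevitch--Hautus (PBH) rank test to the discrete-time pair $(\bm{\Phi}_s,\bm{\Psi}_s)$ and then exploit the block Kronecker structure to peel the global test apart layer by layer. Since (\ref{sampled-multi-layer}) is a discrete-time LTI system, its reachability is equivalent to $\mathrm{rank}[\,sI_{MNn}-\bm{\Phi}_s\mid\bm{\Psi}_s\,]=MNn$ for every $s\in\mathbb{C}$, and reachability implies controllability. It is precisely this one-directional implication, together with the fact that the reachable and controllable subspaces of a discrete-time system need not coincide, that makes the resulting condition sufficient but not necessary, as noted before the statement. I would therefore aim to show that a rank deficiency of $[\,sI-\bm{\Phi}_s\mid\bm{\Psi}_s\,]$ at some $s$ is equivalent to the existence of a nontrivial tuple $(F^1,\dots,F^M)$ solving (\ref{gen_1})--(\ref{gen_2}).

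First I would write the PBH deficiency as the existence of a nonzero left vector $\mathbf{F}=[\mathbf{F}^1,\dots,\mathbf{F}^M]$, partitioned conformally with the $M$ layers, satisfying $\mathbf{F}(sI-\bm{\Phi}_s)=\mathbf{0}$ and $\mathbf{F}\bm{\Psi}_s=\mathbf{0}$. Each layer block $\mathbf{F}^K\in\mathbb{C}^{1\times Nn}$ is then regrouped node by node into a matrix $F^K\in\mathbb{C}^{N\times n}$ whose $i$-th row is the length-$n$ component of $\mathbf{F}^K$ attached to node $i$. The computational heart of the argument is the reshaping identity: for any $G\in\mathbb{R}^{N\times N}$ and $T$ of compatible size, right-multiplication $\mathbf{F}^K(G\otimes T)$ corresponds, under this regrouping, to $G^\top F^K T$, because left-multiplication by a row vector transposes the first Kronecker factor.

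With that identity in hand I would read off the $K$-th block column of $\mathbf{F}(sI-\bm{\Phi}_s)=\mathbf{0}$. The diagonal block $sI_{Nn}-I_N\otimes e^{A^Kh}-W^K\otimes\mathcal{H}^K(h)$ produces $F^K(sI_n-e^{A^Kh})-(W^K)^\top F^K\mathcal{H}^K(h)$, while each off-diagonal block $-D^{L,K}\otimes\mathcal{P}^{L,K}(h)$ with $L\neq K$ contributes $-(D^{L,K})^\top F^L\mathcal{P}^{L,K}(h)$; collecting these reproduces (\ref{gen_1}) verbatim. Likewise, since $\bm{\Psi}_s$ is block diagonal with blocks $\Delta^K\otimes\mathcal{B}^K(h)$ and $\Delta^K$ is diagonal, the condition $\mathbf{F}\bm{\Psi}_s=\mathbf{0}$ reshapes to $\Delta^K F^K\mathcal{B}^K(h)=O$, which is (\ref{gen_2}).

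Finally I would close the logic: a nonzero $\mathbf{F}$ in the PBH null space at some $s$ corresponds bijectively to a nontrivial $(F^1,\dots,F^M)$ solving (\ref{gen_1})--(\ref{gen_2}) at that $s$ (note that for $s\notin\sigma(\bm{\Phi}_s)$ the bracket is automatically full rank, so only eigenvalues matter). Hence, if the sole solution is $F^K=0$ for all $K$ at every $s\in\mathbb{C}$, no rank deficiency occurs, the PBH test passes, and the system is reachable and therefore controllable. I expect the only delicate step to be the bookkeeping in the reshaping identity, in particular obtaining the transposes $(W^K)^\top$ and $(D^{L,K})^\top$ and the swapped layer indices $L,K$, which arise because passing a left row vector through $G\otimes T$ transposes the topology factor and exchanges the roles of the source and target layers.
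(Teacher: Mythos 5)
Your proposal is correct and follows essentially the same route as the paper: apply the PBH rank test to $(\bm{\Phi}_s,\bm{\Psi}_s)$, reshape the left null row vector layer-by-layer into matrices $F^K\in\mathbb{C}^{N\times n}$, and observe that the null-space equations are exactly (\ref{gen_1})--(\ref{gen_2}). You in fact spell out the Kronecker reshaping identity (yielding the transposes $(W^K)^\top$, $(D^{L,K})^\top$) and the reachability-versus-controllability caveat more explicitly than the paper does, which simply asserts the equivalence.
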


\begin{proof}
According to the PBH rank condition, system (\ref{sampled-multi-layer},\ref{sampled-multi-layer-detail}) is controllable if $[sI_{MNn}-\bm{\Phi}_s,\bm{\Psi}_s]$ is of full rank for $\forall{s}\in\mathbb{C}$, that is, 
\begin{equation}\label{proof1}
    \begin{aligned}
    &[f^1,...,f^M](sI_{MNn}-\bm{\Phi}_s)=\mathbf{0}\\
    &[f^1,...,f^M]\bm{\Psi}_s=\mathbf{0}
    \end{aligned}
\end{equation}
hold only if $f^K=\mathbf{0}$ for every $K=1,...,M$, where $f^K=[f^K_1,...,f^K_N],f^K_i\in\mathbb{C}^{1\times{n}}$ for $i=1,...,N$.
Let $F^K=[(f_1^K)^\top,...,(f_N^K)^\top]^\top$, then it is easy to find that equations (\ref{proof1}) are equivalent to equations (\ref{gen_1}-\ref{gen_2}).
Thus, system (\ref{sampled-multi-layer}-\ref{sampled-multi-layer-detail}) is controllable if $\forall{s}\in\mathbb{C}$, equations (\ref{gen_1}-\ref{gen_2}) have a unique solution $F^K=O$ for every $K=1,...,M$.
\end{proof}

\section{Drive-response Mode between Two Layers}\label{sec:2_dri}
Consider a two-layer networked sampled-data system, where a ZOH is added on each control and transmission channel.
The inter-layer couplings between two layers are of drive-response mode, which means inter-layer links are only from the drive layer to the response layer.\par

\subsection{General Two-layer Drive-response Mode}
To begin with, consider the general two-layer drive-response mode, where the node systems at different layers are heterogeneous.
In this case, $\bm{\Phi}_s$ and $\bm{\Psi}_s$ in (\ref{sampled-multi-layer}) are
\begin{equation}\label{two_sam}
\bm{\Phi}_s=\begin{bmatrix}
\Phi_s^{1,1} & O\\
\Phi_s^{2,1} & \Phi_s^{2,2}
\end{bmatrix},
\bm{\Psi}_s=\begin{bmatrix}
\Psi_s^{1,1} & O\\
O & \Psi_s^{2,2}
\end{bmatrix},\\    
\end{equation}
\begin{equation}\label{two-layer-driver_s}
\begin{aligned}
&\Phi_s^{1,1}=I_N\otimes{e^{A^1h}}+W^1\otimes\mathcal{H}^1(h),\\
&\Phi_s^{2,1}=D^{2,1}\otimes\mathcal{P}^{2,1}(h),\\ &\Phi_s^{2,2}=I_N\otimes{e^{A^2h}}+W^2\otimes\mathcal{H}^2(h),\\
&\Psi_s^{1,1}=\Delta^1\otimes\mathcal{B}^1(h),
\Psi_s^{2,2}=\Delta^2\otimes\mathcal{B}^2(h).
\end{aligned}
\end{equation}

The eigenvalues and the eigenspace of each layer are given by Lemma \ref{two-driver-value} and Lemma \ref{two-driver-vector}, respectively.

\begin{lemma}\label{two-driver-value}
Let $\sigma(W^K)=\{\lambda_1^K,...,\lambda_{r^K}^K\}$, and $\sigma(E^K_i)=\{\theta_{i,1}^K,...,\theta_{i,p(i)}^K\}$, where $E_i^K=e^{A^Kh}+\lambda_i^K\mathcal{H}^K(h)$, $i=1,...,r^K$.
Then $\sigma(\Phi^{K,K}_s)=\{\theta_{1,1}^K,...,\theta_{1,p(1)}^K,...,\theta_{r^K,1}^K,...,$ $\theta_{r^K,p(r^K)}^K\}$, $K=1,2$.
\end{lemma}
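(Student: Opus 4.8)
The plan is to strip the network topology $W^K$ out of $\Phi^{K,K}_s$ by a similarity transformation compatible with the Kronecker structure, and then read off the spectrum from a block-triangular normal form. First I would bring $W^K$ to Schur (upper-triangular) form: there exists a nonsingular $U\in\mathbb{C}^{N\times N}$ with $U^{-1}W^KU=T$, where $T$ is upper triangular and its diagonal entries are the eigenvalues of $W^K$, each repeated according to its algebraic multiplicity; the distinct values on this diagonal are exactly $\lambda^K_1,\dots,\lambda^K_{r^K}$. Crucially, only triangularization is needed here, not diagonalization, so the argument covers a general (possibly non-diagonalizable) intra-layer topology.

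Next I would conjugate $\Phi^{K,K}_s$ by $U\otimes I_n$. Using the mixed-product rule $(A\otimes B)(C\otimes D)=(AC)\otimes(BD)$ together with the expression $\Phi^{K,K}_s=I_N\otimes e^{A^Kh}+W^K\otimes\mathcal{H}^K(h)$ from (\ref{sampled-multi-layer-detail}), one obtains
\begin{equation*}
(U\otimes I_n)^{-1}\Phi^{K,K}_s(U\otimes I_n)=I_N\otimes e^{A^Kh}+T\otimes\mathcal{H}^K(h).
\end{equation*}
Since similar matrices share the same spectrum, it suffices to compute the eigenvalues of the right-hand side.

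Then I would exploit its structure. Because $T$ is upper triangular, $T\otimes\mathcal{H}^K(h)$ is block upper triangular with $n\times n$ blocks, while $I_N\otimes e^{A^Kh}$ is block diagonal; hence the sum is block upper triangular, whose $j$-th diagonal block is $e^{A^Kh}+t_{jj}\mathcal{H}^K(h)$ with $t_{jj}$ the $j$-th diagonal entry of $T$. The spectrum of a block upper-triangular matrix is the union of the spectra of its diagonal blocks, so $\sigma(\Phi^{K,K}_s)=\bigcup_{j=1}^{N}\sigma\bigl(e^{A^Kh}+t_{jj}\mathcal{H}^K(h)\bigr)$. Collapsing the repeated diagonal entries to the distinct eigenvalues $\lambda^K_1,\dots,\lambda^K_{r^K}$ of $W^K$ identifies each distinct diagonal block with some $E^K_i=e^{A^Kh}+\lambda^K_i\mathcal{H}^K(h)$, and the union reduces to $\bigcup_{i=1}^{r^K}\sigma(E^K_i)=\{\theta^K_{1,1},\dots,\theta^K_{r^K,p(r^K)}\}$, which is the claimed set for $K=1,2$.

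I expect the main point to be bookkeeping rather than a deep obstacle: one must keep in mind that the equality is asserted for $\sigma(\cdot)$ as a set of distinct eigenvalues (per the convention counting via geometric multiplicity), so that repeated diagonal entries of $T$ and repeated eigenvalues within a single block $E^K_i$ need no separate treatment. The only place demanding care is verifying that the conjugation by $U\otimes I_n$ leaves $I_N\otimes e^{A^Kh}$ invariant and sends $W^K\otimes\mathcal{H}^K(h)$ to $T\otimes\mathcal{H}^K(h)$; once the block-triangular form is in hand, the spectral conclusion is immediate.
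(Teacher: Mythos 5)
Your proof is correct and takes essentially the same route as the paper, which omits an explicit proof of this lemma but uses the identical idea in the proof of Lemma \ref{eigen_homo}: conjugate by a Kronecker factor that triangularizes the topology matrix, observe that $I_N\otimes e^{A^Kh}+W^K\otimes\mathcal{H}^K(h)$ becomes block upper triangular with diagonal blocks $e^{A^Kh}+\lambda_i^K\mathcal{H}^K(h)$, and read off the spectrum. Your substitution of the Schur form for the paper's Jordan form is immaterial for the spectrum (the paper needs the Jordan form only later, to construct the generalized eigenvectors in Lemma \ref{two-driver-vector}).
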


\begin{lemma}\label{two-driver-vector}
Assume that the left Jordan chain of $W^K$ about $\lambda_i^K$ is $v_i^K(1),...,v_i^K(\alpha_i^K)$, and the generalized left Jordan chain of $E_i^K$ about $\mathcal{H}^K(h)$ related to $\theta_{i,j}^K$ is $\xi_{i,j}^K(1),...,\xi_{i,j}^K(\gamma_{i,j}^K)$, where $i\in\{1,2,...,r^K\}$ and $j\in\{1,2,...,p^K(i)\}$.
Then the eigenspace of layer $K$ with respect to $\theta_{i,j}^K$ is $M(\theta_{i,j}^K|\Phi^{K,K}_s)=V(\theta_{i,j}^K)$, where $V(\theta_{i,j}^K)=span\{\eta_{i,j}^K(1),...,\eta_{i,j}^K(\beta_{i,j}^K)\},$ with $\eta_{i,j}^K(1)=v_i^K(1)\otimes\xi_{i,j}^K(1),\eta_{i,j}^K(2)=v_i^K(1)\otimes\xi_{i,j}^K(2)+v_i^K(2)\otimes\xi_{i,j}^K(1),..., \eta_{i,j}^K(\beta_{i,j}^K)=v_i^K(\beta_{i,j}^K)\otimes\xi_{i,j}^K(1)+...+v_i^K(1)\otimes\xi_{i,j}^K(\beta_{i,j}^K),\beta_{i,j}^K=min\{\alpha_i^K,\gamma_{i,j}^K\}$, $i=1,2,...,r^K$ and $j=1,2,...,p^K(i)$.
Specially, if $\theta_{i_1,j_1}^K=...=\theta_{i_q,j_q}^K$, $q>1$, the eigenspace of $\Phi^{K,K}_s$ with respect to $\theta_{i,j}^K$ should be the direct sum of all the eigenspace about $\theta_{i,j}^K$, i.e., $M(\theta_{i,j}^K|\Phi^{K,K}_s)=\oplus_{l=1}^q{V(\theta_{i_l,j_l}^K)}$.
\end{lemma}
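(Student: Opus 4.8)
The plan is to ease notation by dropping the layer index $K$ and the argument $h$ from the building blocks, so that $\Phi_s^{K,K}=I_N\otimes G+W\otimes\mathcal{H}$ with $G=e^{A^Kh}$, $\mathcal{H}=\mathcal{H}^K(h)$, and $E_i=G+\lambda_i\mathcal{H}$, and to work throughout with left eigenvectors (row vectors), as required by the PBH-type test behind Theorem 1. Conceptually I would first reduce $W$ to Jordan form: this splits $\Phi_s^{K,K}$ into blocks indexed by the eigenvalues $\lambda_i$ of $W$, and on the block attached to $\lambda_i$ the spectrum is governed by $E_i$, recovering Lemma \ref{two-driver-value}. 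The eigenvectors for $\theta_{i,j}$ should then be assembled by ``convolving'' the left Jordan chain $v_i(1),\dots,v_i(\alpha_i)$ of $W$ at $\lambda_i$ with the generalized left Jordan chain $\xi_{i,j}(1),\dots,\xi_{i,j}(\gamma_{i,j})$ of $E_i$ about $\mathcal{H}$ at $\theta_{i,j}$, which is exactly the combination appearing in the statement.

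The core step is a direct check that each $\eta(l)=\sum_{t=1}^{l}v_i(t)\otimes\xi_{i,j}(l+1-t)$ is a left eigenvector for $\theta_{i,j}$. Expanding $\eta(l)\,\Phi_s^{K,K}$ with the mixed-product rule $(a\otimes b)(C\otimes D)=(aC)\otimes(bD)$ and substituting the two chain recurrences $v_i(t)(W-\lambda_i I)=v_i(t-1)$ and $\xi_{i,j}(s)(E_i-\theta_{i,j}I)=-\xi_{i,j}(s-1)\mathcal{H}$ (with $v_i(0)=\xi_{i,j}(0)=\mathbf{0}$), one finds that every term carrying an explicit $\mathcal{H}$ reorganizes into two copies of opposite sign that telescope, leaving exactly $\theta_{i,j}\eta(l)$. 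The same computation pins down the truncation $\beta_{i,j}=\min\{\alpha_i,\gamma_{i,j}\}$: the convolution closes up only while both chains still provide their required links, and the cancellation breaks at the first index past which one of the chains is exhausted, so no $\eta(l)$ with $l>\beta_{i,j}$ remains an eigenvector.

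Next I would establish that $\eta(1),\dots,\eta(\beta_{i,j})$ are linearly independent and then that they exhaust $M(\theta_{i,j}|\Phi_s^{K,K})$. Independence is quick: the sets $\{v_i(t)\}$ and $\{\xi_{i,j}(s)\}$ are each independent, hence so are the products $\{v_i(t)\otimes\xi_{i,j}(s)\}$, and since $\eta(l)$ lives entirely on the ``anti-diagonal'' $t+s=l+1$ while distinct $l$ give disjoint anti-diagonals, no nontrivial relation can hold. Completeness --- that these $\beta_{i,j}$ vectors span the whole eigenspace rather than a proper subspace --- is the step I expect to be the main obstacle, since merely exhibiting eigenvectors does not bound the geometric multiplicity from above. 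I would settle it by analyzing the Jordan-reduced form, on which the $\lambda_i$-block of $\Phi_s^{K,K}$ becomes block-bidiagonal with $E_i$ on the diagonal and $\mathcal{H}$ on the superdiagonal: solving its left-eigenvector equations level by level forces any eigenvector to vanish on all coordinate blocks beyond the reach of the $\mathcal{H}$-chain, so the eigenspace within the $\lambda_i$-block has dimension exactly $\min\{\alpha_i,\gamma_{i,j}\}$, and summing over blocks accounts for all of $M(\theta_{i,j}|\Phi_s^{K,K})$.

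Finally, the coincident case $\theta_{i_1,j_1}=\cdots=\theta_{i_q,j_q}$ follows from the same block picture: the contributing chains belong either to different Jordan blocks of $W$ or to different generalized chains within a single $E_i$, and in both situations they correspond to independent pieces of the reduced matrix. Consequently the subspaces $V(\theta_{i_l,j_l})$ meet only in $\mathbf{0}$ and their sum is direct, yielding $M(\theta_{i,j}|\Phi_s^{K,K})=\oplus_{l=1}^{q}V(\theta_{i_l,j_l})$ as claimed.
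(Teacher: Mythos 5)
Your proposal is correct and follows essentially the same route the paper uses: the paper itself omits a proof of this lemma (attributing the generalized-Jordan-chain decomposition to its reference \cite{hao2019new}), but its proof of the analogous homogeneous Lemma \ref{eigen_homo} is exactly your argument --- reduce $W^K$ to Jordan form so that $\Phi_s^{K,K}$ becomes block-bidiagonal with $E_i^K$ on the diagonal and $\mathcal{H}^K(h)$ on the superdiagonal, read off the left eigenvectors as convolutions of the two chains truncated at $\min\{\alpha_i^K,\gamma_{i,j}^K\}$, and transform back. Your additional direct telescoping verification of $\eta(l)\Phi_s^{K,K}=\theta_{i,j}^K\eta(l)$ and your level-by-level completeness count are consistent with, and slightly more explicit than, what the paper records.
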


\begin{remark}
The notion of generalized Jordan chain in Lemma \ref{two-driver-vector} is learnt from \cite{hao2019new}, as well as the method of decomposing eigenvalues and eigenspace of $\Phi^{K,K}_s$.
It is obvious that $\sigma(\bm{\Phi}_s)=\sigma(\Phi^{1,1}_s)\cup\sigma(\Phi^{2,2}_s)$.
Thus, a sufficient controllability condition can be obtained for system (\ref{sampled-multi-layer},\ref{two_sam}-\ref{two-layer-driver_s}) based on Lemma \ref{two-driver-value} and Lemma \ref{two-driver-vector}.
Theorem \ref{con:two-driver} reveals that the controllability of multilayer networked sampled-data systems is a collective effect of mutually coupled factors such as network topology, intra- and inter-layer couplings, the sampling period, external control inputs and node dynamics.
\end{remark}

\begin{theorem}\label{con:two-driver}
The two-layer networked sampled-data system with drive-response mode (\ref{sampled-multi-layer},\ref{two_sam}-\ref{two-layer-driver_s}) is controllable if (1) and (2) hold simultaneously:\par
(1) $\forall\eta\in{M}(\theta_{i,j}^1|\Phi^{1,1}_s)$, $\eta\neq\mathbf{0}$, $\eta(\Delta^1\otimes\mathcal{B}^1(h))\neq\mathbf{0}$ for all $i=1,2,...,r^1$, $j=1,2,...,p^1(i)$.\par
(2) $\forall\eta\in{M}(\theta_{i,j}^2|\Phi^{2,2}_s)$, $\eta\neq\mathbf{0}$, and $\forall\xi\in\Xi_{i,j}$, $[\xi(\Delta^1\otimes\mathcal{B}^1(h)),\eta(\Delta^2\otimes\mathcal{B}^2(h))]\neq\mathbf{0}$ for all $i=1,2,...,r^2, j=1,2,...,p^2(i)$, where $\Xi_{i,j}=\{\xi\in\mathbb{C}^{1\times{Nn}}|\xi(\theta_{i,j}^2I_{Nn}-\Phi^{1,1}_s)=\eta\Phi^{2,1}_s\}$.
\end{theorem}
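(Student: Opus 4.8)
The plan is to verify the PBH rank condition for the pair $(\bm{\Phi}_s,\bm{\Psi}_s)$ exactly as in the proof of Theorem 1, but to exploit the block lower-triangular structure of $\bm{\Phi}_s$ in (\ref{two_sam}). First I would suppose, for contradiction, that for some $s\in\mathbb{C}$ there is a nonzero row vector $[f^1,f^2]$ with $f^K\in\mathbb{C}^{1\times{Nn}}$ satisfying $[f^1,f^2](sI_{2Nn}-\bm{\Phi}_s)=\mathbf{0}$ and $[f^1,f^2]\bm{\Psi}_s=\mathbf{0}$. Writing these out blockwise using (\ref{two_sam})-(\ref{two-layer-driver_s}) yields the four relations $f^2(sI_{Nn}-\Phi_s^{2,2})=\mathbf{0}$, $f^1(sI_{Nn}-\Phi_s^{1,1})=f^2\Phi_s^{2,1}$, $f^1(\Delta^1\otimes\mathcal{B}^1(h))=\mathbf{0}$, and $f^2(\Delta^2\otimes\mathcal{B}^2(h))=\mathbf{0}$. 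The goal is to force $f^1=f^2=\mathbf{0}$, contradicting the assumption. The decoupling is the key structural feature: the first relation constrains $f^2$ alone, so I would split on whether $f^2=\mathbf{0}$.

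If $f^2=\mathbf{0}$, the second relation reduces to $f^1(sI_{Nn}-\Phi_s^{1,1})=\mathbf{0}$. When $s\notin\sigma(\Phi_s^{1,1})$ this forces $f^1=\mathbf{0}$ immediately; when $s=\theta_{i,j}^1$ for some $i,j$, the vector $f^1$ lies in the left eigenspace $M(\theta_{i,j}^1|\Phi_s^{1,1})$ furnished by Lemma \ref{two-driver-vector}, and the third relation reads $f^1(\Delta^1\otimes\mathcal{B}^1(h))=\mathbf{0}$. Condition (1) precisely excludes a nonzero such $f^1$, so again $f^1=\mathbf{0}$. This disposes of the case $f^2=\mathbf{0}$.

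The substantive case is $f^2\neq\mathbf{0}$, which I expect to be the main obstacle. Here the first relation forces $s\in\sigma(\Phi_s^{2,2})$, say $s=\theta_{i,j}^2$, and places $f^2$ in $M(\theta_{i,j}^2|\Phi_s^{2,2})$ as a nonzero vector; set $\eta=f^2$. The second relation then reads $f^1(\theta_{i,j}^2I_{Nn}-\Phi_s^{1,1})=\eta\Phi_s^{2,1}$, which is exactly the membership $f^1\in\Xi_{i,j}$ appearing in the statement. Combining the remaining two relations gives $[f^1(\Delta^1\otimes\mathcal{B}^1(h)),\eta(\Delta^2\otimes\mathcal{B}^2(h))]=\mathbf{0}$ with $\eta\neq\mathbf{0}$ and $f^1\in\Xi_{i,j}$, which directly contradicts condition (2). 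Hence $f^2\neq\mathbf{0}$ is impossible, and together with the previous paragraph we conclude $[f^1,f^2]=\mathbf{0}$, so $[sI_{2Nn}-\bm{\Phi}_s,\bm{\Psi}_s]$ has full rank for every $s$ and the system is controllable. The delicate points to check are that $\Xi_{i,j}$ is allowed to be empty (in which case no offending $f^1$ exists and there is nothing to prove), and that $M(\theta_{i,j}^K|\Phi_s^{K,K})$ as built in Lemma \ref{two-driver-vector} is genuinely the \emph{left} eigenspace, so that $f^K$ indeed ranges over it; both are needed to match the quantifiers in conditions (1) and (2) exactly.
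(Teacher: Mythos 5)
Your proposal is correct and follows essentially the same route as the paper's own proof: both verify the PBH left-eigenvector test for $[\theta I_{2Nn}-\bm{\Phi}_s,\bm{\Psi}_s]$, exploit the block lower-triangular structure of $\bm{\Phi}_s$ to decouple the relation constraining the second-layer component, and split into the cases where that component is zero (handled by condition (1)) or nonzero (handled by condition (2), with the first-layer component landing in $\Xi_{i,j}$). The two remarks you flag at the end — that an empty $\Xi_{i,j}$ is harmless and that the eigenspaces of Lemma \ref{two-driver-vector} are left eigenspaces — are both correct and consistent with the paper's usage.
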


\begin{proof}
According to the PBH criterion, system (\ref{sampled-multi-layer},\ref{two_sam}-\ref{two-layer-driver_s}) is controllable if $\forall\theta\in\sigma(\bm{\Phi}_s),[\theta{I}_{2Nn}-\bm{\Phi}_s,\bm{\Psi}_s]$ is of full row rank.
If system (\ref{sampled-multi-layer},\ref{two_sam}-\ref{two-layer-driver_s}) is uncontrollable, i.e., $\exists\theta\in\sigma(\bm{\Phi}_s)$, and $rank([\theta{I}_{2Nn}-\bm{\Phi}_s,\bm{\Psi}_s])<2Nn$.
If $\theta\in\sigma(\Phi^{2,2}_s)$, there exists some nonzero $[\xi,\eta]$ satisfying
\noindent
$\begin{aligned}
    &[\xi,\eta][I_{2Nn}-\bm{\Phi}_s,\bm{\Psi}_s]\\
    &=[\xi,\eta]\\&\left[ {\begin{array}{c:c}\begin{matrix}
    \theta{I}_{Nn}-\Phi^{1,1}_s & O\\
    -\Phi^{2,1}_s & \theta{I}_{Nn}-\Phi^{2,2}_s
    \end{matrix}&
    \begin{matrix}
    \Delta^1\otimes\mathcal{B}^1(h) & O\\
    O & \Delta^2\otimes\mathcal{B}^2(h)
    \end{matrix}
    \end{array}} \right]\\
    &=\mathbf{0}.
\end{aligned}$
Therefore, $\eta\in{M}(\theta|\Phi^{2,2}_s)$, $\xi(\theta{I}_{Nn}-\Phi^{1,1}_s)=\eta\Phi^{2,1}_s$, and $[\xi(\Delta^1\otimes\mathcal{B}^1(h)),\eta(\Delta^2\otimes\mathcal{B}^2(h))]=\mathbf{0}$. 
If $\eta\neq\mathbf{0}$, condition (2) is contradicted.
If $\eta=\mathbf{0}$, it is easy to find that $\xi\neq\mathbf{0}$, and $\theta$ is also an eigenvalue of $\Phi_s^{1,1}$, so $\xi\in{M}(\theta|\Phi_s^{1,1})$, and $\xi(\Delta^1\otimes{B}^1(h))=\mathbf{0}$, which contradicts condition (1).
Otherwise, $\theta\in\sigma(\Phi^{1,1}_s)$ and $\theta\notin\sigma(\Phi^{2,2}_s)$.
Then there exists a nonzero $\eta$ satisfying:\par
\noindent
$\begin{aligned}
    &[\eta,\mathbf{0}][I_{2Nn}-\bm{\Phi}_s,\bm{\Psi}_s]\\
    &=[\eta,\mathbf{0}]\\&\left[ {\begin{array}{c:c}\begin{matrix}
    \theta{I}_{Nn}-\Phi^{1,1}_s & O\\
    -\Phi^{2,1}_s & \theta{I}_{Nn}-\Phi^{2,2}_s
    \end{matrix}&
    \begin{matrix}
    \Delta^1\otimes\mathcal{B}^1(h) & O\\
    O & \Delta^2\otimes\mathcal{B}^2(h)
    \end{matrix}
    \end{array}} \right]\\
    &=\mathbf{0}.
\end{aligned}$
It indicates that $\eta\in{M}(\theta|\Phi^{1,1}_s)$ and $\eta(\Delta^1\otimes\mathcal{B}^1(h))=\mathbf{0}$, which contradicts condition (1).
The proof is complete.
\end{proof}

\begin{remark}
Condition (1) of Theorem \ref{con:two-driver} is a sufficient controllability condition for the drive layer, but the controllability of the response layer can not be independently verified by the intra-layer condition.
Even if the response layer is uncontrollable itself, the whole multilayer networked sampled-data system can still be controllable due to the inter-layer interactions from the drive layer, which is illustrated in Example \ref{exa_2} in Section \ref{sec:exa_1}.
Condition (2) of Theorem \ref{con:two-driver} indicates that it is possible to alter the controllability of system (\ref{sampled-multi-layer},\ref{two_sam}-\ref{two-layer-driver_s}) by modifying the configuration of the inter-layer couplings.\par
\end{remark}

\begin{corollary}\label{nece}
If $0\notin\sigma(\bm{\Phi}_s)$, the two-layer networked sampled-data system (\ref{sampled-multi-layer},\ref{two_sam}-\ref{two-layer-driver_s}) is controllable if and only if (1) and (2) in Theorem \ref{con:two-driver} hold simultaneously.
\end{corollary}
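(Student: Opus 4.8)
The plan is to observe that the only thing separating sufficiency from necessity in Theorem~\ref{con:two-driver} is the behaviour of the PBH pencil at the single point $s=0$, and that the hypothesis $0\notin\sigma(\bm{\Phi}_s)$ removes precisely this obstruction. The ``if'' direction is already Theorem~\ref{con:two-driver} and needs no extra hypothesis, so all the work lies in the ``only if'' direction, which I would establish in contrapositive form.

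First I would recall the correct controllability criterion for a discrete-time system $\mathbf{X}((k+1)h)=\bm{\Phi}_s\mathbf{X}(kh)+\bm{\Psi}_s\mathbf{U}(kh)$: it is controllable (to the origin) if and only if $\mathrm{rank}[sI_{2Nn}-\bm{\Phi}_s,\bm{\Psi}_s]=2Nn$ for every $s\in\mathbb{C}\setminus\{0\}$. This is exactly the gap pointed out in Section~\ref{sec:gen}: the rank is allowed to drop at $s=0$, since any mode associated with the eigenvalue $0$ decays to the origin on its own and so does not impede controllability. Because full rank is automatic at any $s$ that is not an eigenvalue of $\bm{\Phi}_s$, this criterion is equivalent to requiring full rank of the pencil at every nonzero eigenvalue of $\bm{\Phi}_s$.

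Under the standing hypothesis $0\notin\sigma(\bm{\Phi}_s)$, every eigenvalue of $\bm{\Phi}_s$ is nonzero, so the discrete-time criterion reduces to full rank of the pencil at \emph{every} eigenvalue of $\bm{\Phi}_s$, which is exactly the condition shown to follow from (1) and (2) in the proof of Theorem~\ref{con:two-driver}. To obtain necessity I would run that proof backwards: assume (1) or (2) fails and produce a nonzero left annihilator of $[\theta I_{2Nn}-\bm{\Phi}_s,\bm{\Psi}_s]$ at the associated eigenvalue $\theta$. If (2) fails there are $\eta\neq\mathbf{0}$ in $M(\theta_{i,j}^2|\Phi^{2,2}_s)$ and $\xi\in\Xi_{i,j}$ with $[\xi(\Delta^1\otimes\mathcal{B}^1(h)),\eta(\Delta^2\otimes\mathcal{B}^2(h))]=\mathbf{0}$; one checks directly that $[\xi,\eta]$ annihilates the pencil at $\theta=\theta_{i,j}^2$. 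If (1) fails there is $\eta\neq\mathbf{0}$ in $M(\theta_{i,j}^1|\Phi^{1,1}_s)$ with $\eta(\Delta^1\otimes\mathcal{B}^1(h))=\mathbf{0}$, and then $[\eta,\mathbf{0}]$ annihilates the pencil at $\theta=\theta_{i,j}^1$. In either case the rank drops at an eigenvalue $\theta$, and the decisive point is that $\theta\neq0$ because $0\notin\sigma(\bm{\Phi}_s)$; hence the discrete-time controllability criterion is violated, contradicting the assumed controllability. This forces (1) and (2), completing the ``only if'' direction and hence the equivalence.

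The step I expect to be the main obstacle is stating the discrete-time criterion cleanly and then verifying that the rank deficiency produced by a failure of (1) or (2) always lands at a \emph{nonzero} eigenvalue. The whole corollary turns on the fact that reachability and null-controllability can diverge only at $s=0$, so the hypothesis $0\notin\sigma(\bm{\Phi}_s)$ must be invoked at exactly this juncture to ensure that the defect cannot be dismissed as a harmless mode at the origin.
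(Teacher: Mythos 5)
Your proposal is correct and follows essentially the same route as the paper's own proof: you invoke the discrete-time PBH criterion, note that the hypothesis $0\notin\sigma(\bm{\Phi}_s)$ makes full rank at every eigenvalue both necessary and sufficient, and then establish necessity by exhibiting the annihilators $[\eta,\mathbf{0}]$ and $[\xi,\eta]$ of the pencil when condition (1) or (2) fails. Your explicit remark that the rank defect must land at a nonzero eigenvalue is a slightly more careful articulation of the same point the paper makes by assuming $\bm{\Phi}_s$ non-singular.
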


\begin{proof}
According to the PBH rank condition, if the state matrix $\bm{\Phi}_s$ is non-singular, i.e., $0\notin\sigma(\bm{\Phi}_s)$, system (\ref{sampled-multi-layer},\ref{two_sam}-\ref{two-layer-driver_s}) is controllable if and only if $\forall\theta\in\sigma(\bm{\Phi}_s),[\theta{I}_{2Nn}-\bm{\Phi}_s,\bm{\Psi}_s]$ is of full row rank.
The part of sufficiency has been provided above.
The necessity part is given as follows.\par
If condition (1) is not satisfied, i.e., $\exists\theta\in\sigma(\Phi^{1,1}_s)$ and a corresponding nonzero $\eta\in{M}(\theta|\Phi^{1,1}_s)$, and $\eta(\Delta^1\otimes\mathcal{B}^1(h))=\mathbf{0}$.
Thus, $[\eta,\mathbf{0}][\theta{I}_{2Nn}-\bm{\Phi}_s,\bm{\Psi}_s]=\mathbf{0}$, which indicates that system
(\ref{sampled-multi-layer},\ref{two_sam}-\ref{two-layer-driver_s}) is uncontrollable.
If condition (2) does not hold, i.e., $\exists\theta\in\sigma(\Phi^{2,2}_s)$ and $\eta\in{M}(\theta|\Phi^{2,2}_s),\xi\in\mathbb{C}^{1\times{Nn}}$, satisfying $\xi(\theta{I}_{Nn}-\Phi^{1,1}_s)=\eta\Phi^{2,1}_s$ and $[\xi(\Delta^1\otimes\mathcal{B}^1(h)),\eta(\Delta^2\otimes\mathcal{B}^2(h))]=\mathbf{0}$.
Thus, $[\eta,\xi][I_{2Nn}-\bm{\Phi}_s,\bm{\Psi}_s]=\mathbf{0}$, which suggests that system (\ref{sampled-multi-layer},\ref{two_sam}-\ref{two-layer-driver_s}) is uncontrollable.
\end{proof}

\begin{corollary}\label{coro_topo}
If $0\notin\sigma(\bm{\Phi}_s)$, the two-layer networked sampled-data system (\ref{sampled-multi-layer},\ref{two_sam}-\ref{two-layer-driver_s}) is controllable only if the topology of the first layer $(W^1,\Delta^1)$ is controllable.
\end{corollary}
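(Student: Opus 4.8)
The plan is to prove the contrapositive statement: if the topology pair $(W^1,\Delta^1)$ of the drive layer is uncontrollable, then the whole two-layer system is uncontrollable. Since the hypothesis $0\notin\sigma(\bm{\Phi}_s)$ is in force, Corollary \ref{nece} asserts that the system is controllable if and only if conditions (1) and (2) of Theorem \ref{con:two-driver} both hold. Hence it suffices to exhibit a single violation of condition (1), which is exactly the drive-layer controllability requirement. The strategy is therefore to lift a purely topology-level obstruction, an uncontrollable mode of $(W^1,\Delta^1)$, up to the $Nn$-dimensional drive-layer dynamics $\Phi^{1,1}_s=I_N\otimes e^{A^1h}+W^1\otimes\mathcal{H}^1(h)$ by exploiting its Kronecker structure.

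First I would apply the PBH criterion to the pair $(W^1,\Delta^1)$: its uncontrollability yields an eigenvalue $\lambda\in\sigma(W^1)$ together with a nonzero left eigenvector $v\in\mathbb{C}^{1\times N}$ satisfying $vW^1=\lambda v$ and $v\Delta^1=\mathbf{0}$. Taking $\lambda$ to be the $\lambda_i^1$ of Lemma \ref{two-driver-value} and setting $E=e^{A^1h}+\lambda\mathcal{H}^1(h)=E_i^1$, I then choose any eigenvalue $\theta\in\sigma(E)$ with a nonzero left eigenvector $\xi\in\mathbb{C}^{1\times n}$ (which always exists for a square complex matrix) and form $\eta=v\otimes\xi$. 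This is precisely the eigenvector $\eta_{i,j}^1(1)=v_i^1(1)\otimes\xi_{i,j}^1(1)$ produced by Lemma \ref{two-driver-vector}.

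Next I verify by the Kronecker mixed-product rule that $\eta\neq\mathbf{0}$ is a genuine element of $M(\theta|\Phi^{1,1}_s)$: using $vW^1=\lambda v$, one gets $\eta\,\Phi^{1,1}_s=(vI_N)\otimes(\xi e^{A^1h})+(vW^1)\otimes(\xi\mathcal{H}^1(h))=v\otimes\bigl(\xi(e^{A^1h}+\lambda\mathcal{H}^1(h))\bigr)=v\otimes(\theta\xi)=\theta\,\eta$. The decisive computation is then $\eta(\Delta^1\otimes\mathcal{B}^1(h))=(v\Delta^1)\otimes(\xi\mathcal{B}^1(h))=\mathbf{0}$, because $v\Delta^1=\mathbf{0}$. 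This displays a nonzero $\eta\in M(\theta|\Phi^{1,1}_s)$ that is annihilated by the input matrix, i.e.\ a failure of condition (1) of Theorem \ref{con:two-driver}. Invoking Corollary \ref{nece} then forces the whole system to be uncontrollable, which establishes the contrapositive and hence the corollary.

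The main obstacle is the lifting step itself: I must be certain that the topology-level annihilation $v\Delta^1=\mathbf{0}$ survives tensoring with the node dynamics and that $\eta=v\otimes\xi$ is a bona fide drive-layer eigenvector rather than an accidental vector. Both rest on the identity $(A\otimes B)(C\otimes D)=(AC)\otimes(BD)$, so the care-requiring point is keeping the two index groups aligned, the network factors $v,\Delta^1$ acting over the $N$ index and the node factors $\xi,e^{A^1h},\mathcal{H}^1(h),\mathcal{B}^1(h)$ over the $n$ index. A secondary point worth flagging is the role of $0\notin\sigma(\bm{\Phi}_s)$: besides licensing Corollary \ref{nece}, it guarantees that the exposed eigenvalue $\theta\in\sigma(\Phi^{1,1}_s)\subseteq\sigma(\bm{\Phi}_s)$ is nonzero, so that the mode we uncover is a true controllability obstruction and not merely a reachability one, consistent with the earlier remark distinguishing the reachable and controllable subspaces of the sampled-data system.
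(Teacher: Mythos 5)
Your proposal is correct and follows essentially the same route as the paper: extract a left eigenvector $v$ of $W^1$ with $v\Delta^1=\mathbf{0}$ from the PBH failure of $(W^1,\Delta^1)$, tensor it with a left eigenvector $\xi$ of $e^{A^1h}+\lambda\mathcal{H}^1(h)$, verify via the Kronecker mixed-product rule that $v\otimes\xi$ is a nonzero left eigenvector of $\Phi^{1,1}_s$ annihilated by $\Delta^1\otimes\mathcal{B}^1(h)$, and conclude through the necessity part of Corollary \ref{nece}. The only difference is cosmetic: the paper splits into cases according to the geometric multiplicity of the offending eigenvalue of $W^1$, whereas your direct computation $\eta\,\Phi^{1,1}_s=\theta\,\eta$ handles an arbitrary annihilated eigenvector (including a linear combination within a higher-dimensional eigenspace) uniformly, which is a slight streamlining of the same argument.
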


\begin{proof}
As is shown in Corollary \ref{nece}, if $0\notin\sigma(\bm{\Phi}_s)$, system (\ref{sampled-multi-layer},\ref{two_sam}-\ref{two-layer-driver_s}) is controllable only if (1) and (2) in Theorem \ref{con:two-driver} hold simultaneously.\par
If $(W^1,\Delta^1)$ is not controllable, there exists some $\lambda^1_k\in\sigma(W^1),k\in\{1,...,r^1\}$ and its eigenvector $v^1_k(1)$, which satisfy $v^1_k(1)\Delta^1=\mathbf{0}$.
If the geometric multiplicity of $\lambda_k^1$ is $1$, consider some $\theta_{k,j}^1\in\sigma(E_k^1)$ and its eigenvector $\xi_{k,j}^1(1)$. For $\eta_{k,j}^1(1)\in{M}(\theta_{k,j}^1|\Phi_s^{1,1})$, and $\eta_{k,j}^1(1)(\Delta^1\otimes\mathcal{B}^1(h))=(v^1_k(1)\otimes\xi_{k,j}^1(1))(\Delta^1\otimes\mathcal{B}^1(h))=(v^1_k(1)\Delta^1)\otimes(\xi_{k,j}^1(1)\mathcal{B}^1(h))=\mathbf{0}$, condition (1) of Theorem \ref{con:two-driver} is contradicted thus system (\ref{sampled-multi-layer},\ref{two_sam}-\ref{two-layer-driver_s}) is uncontrollable.
If the geometric multiplicity of $\lambda_k^1$ is $q>1$, it has $q$ linearly independent eigenvectors $v_{k_1}^1(1),v_{k_2}^1(1),...,v_{k_q}^1(1)$, where $k_i\in\{1,2,...,r^1\}$, $i\in\{1,...,q\}$. 
There exists some $v_k=\sum_{i=1}^q a_iv_{k_i}^1(1)$, where $a_i\in\mathbb{C},[a_1,...,a_q]\neq\mathbf{0}$, satisfying $v_k\Delta^1=\mathbf{0}$.
It is obvious that $E_{k_1}^1=...=E_{k_q}^1=e^{A^1h}+\lambda_k^1\mathcal{H}^1(h)$.
Consider some $\theta_{k,j}^1\in\sigma(e^{A^1h}+\lambda_k^1\mathcal{H}^1(h))$ and the corresponding eigenvector $\xi_{k,j}^1(1)$, then one has
$$\begin{aligned}
    &(v_k\otimes\xi^1_{k,j}(1))(\theta_{k,j}^1I_{Nn}-\Phi^{1,1}_s)\\
    =&v_k\otimes\xi^1_{k,j}(1)(\theta_{k,j}^1I_n-e^{A^1h})-\sum_{i=1}^q a_iv_{k_i}^1(1)W^1\otimes\xi_{k,j}^1(1)\mathcal{H}^1(h)\\
    =&v_k\otimes\xi^1_{k,j}(1)(\theta_{k,j}^1I_n-e^{A^1h})-\lambda_k^1\sum_{i=1}^q a_iv_{k_i}^1(1)\otimes\xi_{k,j}^1(1)\mathcal{H}^1(h)\\
    =&v_k\otimes\xi_{k,j}^1(1)(\theta_{k,j}^1I_n-e^{A^1h}-\lambda_k^1\mathcal{H}^1(h))\\
    =&\mathbf{0},
\end{aligned}$$
thus $v_k\otimes{\xi}_{k,j}^1(1)$ is an eigenvector of $\Phi^{1,1}_s$ with respect to $\theta_{k,j}^1$.
Since $(v_k\otimes{\xi}_{k,j}^1(1))(\Delta^1\otimes\mathcal{B}^1(h))=(v_k\Delta^1)\otimes(\xi_{k,j}^1(1)\mathcal{B}^1(h))=\mathbf{0}$, condition (1) of Theorem \ref{con:two-driver} is contradicted, thus system (\ref{sampled-multi-layer},\ref{two_sam}-\ref{two-layer-driver_s}) is uncontrollable.
\end{proof}

\subsection{Homogeneous Situation}

Now discuss the homogeneous case that $A^1=A^2$, $B^1=B^2$, $C^1=C^2$, and $H^1=H^2=P^{2,1}$.
That is, all node systems in the two-layer networked sampled-data system are identical, and the difference among the inner-couplings is ignored.
Then $\bm{\Phi}_s$ and $\bm{\Psi}_s$ in (\ref{sampled-multi-layer}) are
\begin{equation}\label{homo-two-driver_s}
\begin{aligned}
&\bm{\Phi}_s=I_{2N}\otimes{e^{Ah}}+\bar{W}\otimes{\mathcal{H}(h)},\\
&\bm{\Psi}_s=\bar{\Delta}\otimes{\mathcal{B}(h)},
\end{aligned}
\end{equation}
where 
$$\bar{W}=\begin{bmatrix}
W^1 & O\\
D^{2,1} & W^2
\end{bmatrix}, \  \bar{\Delta}=\begin{bmatrix}
\Delta^1 & O\\
O & \Delta^2
\end{bmatrix},$$
and $\mathcal{B}(h)=\int_0^h e^{A\tau} d\tau{B}$, $\mathcal{H}(h)=\int_0^h e^{A\tau} d\tau{HC}$.
Moreover, assume that $\sigma(W^1)\cap\sigma(W^2)=\emptyset$.
Following the method in \cite{hao2019new}, the eigenvalues and corresponding eigenspace of $\bm{\Phi}_s$ in system (\ref{sampled-multi-layer},\ref{homo-two-driver_s}) are derived as follows.

\begin{lemma}\label{lem_homo}
Let $\sigma(W^K)=\{\lambda_1^K,...,\lambda_{r^K}^K\}$, and $\sigma(E^K_i)=\{\theta_{i,1}^K,...,\theta_{i,p(i)}^K\}$, where $E_i^K=e^{Ah}+\lambda_i^K\mathcal{H}(h)$, $i=1,...,r^K$.
Then $\sigma(\Phi^{K,K}_s)=\{\theta_{1,1}^K,...,\theta_{1,p(1)}^K,...,\theta_{r^K,1}^K,...,$ $\theta_{r^K,p(r^K)}^K\}$, $K=1,2$, and $\sigma(\bm{\Phi}_s)=\sigma(\Phi_s^{1,1})\cup\sigma(\Phi_s^{2,2})$.
\end{lemma}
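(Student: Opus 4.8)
The plan is to prove the two assertions in turn: first to pin down the spectrum of each diagonal block $\Phi_s^{K,K}$, and then to assemble $\sigma(\bm{\Phi}_s)$ from the block-triangular structure inherited from $\bar{W}$. Throughout I would reuse the Kronecker-similarity technique of \cite{hao2019new} already exploited in Lemma \ref{two-driver-value}; the only genuinely new ingredients here are the coupling block $D^{2,1}$ and the hypothesis $\sigma(W^1)\cap\sigma(W^2)=\emptyset$, which is what lets the two layers be decoupled cleanly.

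For the diagonal blocks, I would choose a nonsingular $T^K$ bringing $W^K$ to an upper-triangular Jordan form $J^K=(T^K)^{-1}W^KT^K$, whose diagonal carries the eigenvalues $\lambda_1^K,\dots,\lambda_{r^K}^K$ (one Jordan block per entry of the list, in accordance with the paper's convention that $r^K$ is the sum of the geometric multiplicities). Conjugating $\Phi_s^{K,K}=I_N\otimes e^{Ah}+W^K\otimes\mathcal{H}(h)$ by $T^K\otimes I_n$ and applying the mixed-product rule for $\otimes$ leaves $I_N\otimes e^{Ah}$ fixed and sends the second summand to $J^K\otimes\mathcal{H}(h)$. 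Since $J^K$ is upper triangular, the conjugate is block upper triangular, and each group of diagonal blocks attached to $\lambda_i^K$ equals $e^{Ah}+\lambda_i^K\mathcal{H}(h)=E_i^K$. As the spectrum of a block-triangular matrix is the union of the spectra of its diagonal blocks, this gives $\sigma(\Phi_s^{K,K})=\bigcup_{i=1}^{r^K}\sigma(E_i^K)=\{\theta_{i,j}^K\}$, which is the first claim.

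For the full matrix I would use that, in the homogeneous case (\ref{homo-two-driver_s}), $\bm{\Phi}_s=I_{2N}\otimes e^{Ah}+\bar{W}\otimes\mathcal{H}(h)$ with $\bar{W}$ block lower triangular. The cleanest route, which also prepares the companion eigenspace lemma, is to decouple the layers: because $\sigma(W^1)\cap\sigma(W^2)=\emptyset$, the Sylvester equation $W^2X-XW^1=-D^{2,1}$ has a unique solution $X$, so the similarity $S=\left[\begin{smallmatrix}I&O\\X&I\end{smallmatrix}\right]$ block-diagonalizes $\bar{W}$ into $\mathrm{diagblock}\{W^1,W^2\}$. Conjugating $\bm{\Phi}_s$ by $S\otimes I_n$ then produces $\mathrm{diagblock}\{\Phi_s^{1,1},\Phi_s^{2,2}\}$, whence $\sigma(\bm{\Phi}_s)=\sigma(\Phi_s^{1,1})\cup\sigma(\Phi_s^{2,2})$; alternatively, one may simply read this union off the block-triangular form of $\bm{\Phi}_s$ without diagonalizing. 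Combining this with the first part yields the stated description in terms of the $\theta_{i,j}^K$.

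The step that carries the real weight is the reduction of $\Phi_s^{K,K}$ when $W^K$ is defective: a plain eigendecomposition is then unavailable, so the argument must pass through the Jordan form and can only conclude block-triangularity rather than block-diagonality. For the eigenvalues this is harmless, since triangularity already fixes the spectrum, but it is exactly the point at which the companion eigenspace result becomes delicate, requiring the generalized Jordan chains and the truncation $\beta_{i,j}^K=\min\{\alpha_i^K,\gamma_{i,j}^K\}$ of Lemma \ref{two-driver-vector}. I expect the disjoint-spectra hypothesis $\sigma(W^1)\cap\sigma(W^2)=\emptyset$ to be the linchpin: it is what guarantees the Sylvester solvability and lets me treat the layers independently, whereas without it the off-diagonal block $D^{2,1}$ could entangle coincident eigenvalues of $W^1$ and $W^2$ and the clean union would demand a separate multiplicity argument.
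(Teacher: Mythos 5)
Your argument is correct and follows essentially the same route as the paper: the paper's proof of Lemma \ref{eigen_homo} performs exactly this Kronecker--Jordan similarity, reducing each $\Phi_s^{K,K}$ to a block upper-triangular form with diagonal blocks $E_i^K$, and decouples the two layers via the transformation $\bar{T}$ whose off-diagonal block solves the same Sylvester-type equation ($J^2T^{2,1}-T^{2,1}W^1=T^2D^{2,1}$, solvable precisely because $\sigma(W^1)\cap\sigma(W^2)=\emptyset$). Your observation that the union $\sigma(\bm{\Phi}_s)=\sigma(\Phi_s^{1,1})\cup\sigma(\Phi_s^{2,2})$ already follows from the block lower-triangular form of $\bm{\Phi}_s$ without any decoupling is also the reading the paper takes for the eigenvalue statement, reserving the Sylvester step for the eigenspace lemma.
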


\begin{lemma}\label{eigen_homo}
Assume that the left Jordan chain of $W^K$ with respect to $\lambda_i^K$ is $v_i^K(1),...,v_i^K(\alpha_i^K)$, and the generalized left Jordan chain of $E_i^K$ about $\mathcal{H}(h)$ related to $\theta_{i,j}^K$ is $\xi_{i,j}^K(1),...,\xi_{i,j}^K(\gamma_{i,j}^K)$, where $K=1,2$, $i=1,2,...,r^K$ and $j=1,2,...,p^K(i)$.
The eigenspace of $\bm{\Phi}_s$ with respect to $\theta^K_{i,j}$ is $M(\theta^K_{i,j}|\bm{\Phi}_s)=V(\theta^K_{i,j})$, where
$V(\theta^K_{i,j})=span\{\eta^K_{i,j}(1),\eta^K_{i,j}(2),...,\eta^K_{i,j}(\beta^K_{i,j})\}$, $\beta^K_{i,j}=min\{\alpha^K_i,\gamma^K_{i,j}\}$.
For $K=1$, $\eta^1_{i,j}(1)=[v_i^1(1),\mathbf{0}]\otimes\xi_{i,j}^1(1),\eta^1_{i,j}(2)=[v_i^1(2),\mathbf{0}]\otimes\xi_{i,j}^1(1)+[v_i^1(1),\mathbf{0}]\otimes\xi_{i,j}^1(2),...,\eta^1_{i,j}(\beta_{i,j}^1)=[v_i^1(\beta_{i,j}^1),\mathbf{0}]\otimes\xi^1_{i,j}(1)+...+[v_i^1(1),\mathbf{0}]\otimes\xi^1_{i,j}(\beta^1_{i,j})$.
For $K=2$, $\eta^2_{i,j}(1)=[v_i^{2,1}(1),v_i^2(1)]\otimes\xi_{i,j}^2(1),\eta^2_{i,j}(2)=[v_i^{2,1}(2),v_i^2(2)]\otimes\xi_{i,j}^2(1)+[v_i^{2,1}(1),v_i^2(1)]\otimes\xi_{i,j}^2(2),...,\eta^2_{i,j}(\beta_{i,j}^2)=[v_i^{2,1}(\beta_{i,j}^2),v_i^2(\beta_{i,j}^2)]\otimes\xi^2_{i,j}(1)+...+[v_i^{2,1}(1),v_i^1(1)]\otimes\xi^2_{i,j}(\beta^2_{i,j})$, where $v_i^{2,1}(1)=-v_i^2(1)D^{2,1}(W^1-\lambda_i^2I_N)^{-1}$, and $v_i^{2,1}(k)=(v_i^{2,1}(k-1)-v_i^2(k)D^{2,1})(W^1-\lambda_i^2I_N)^{-1},k=2,...,\beta_{i,j}^2$.
Specially, if $\theta^{K_1}_{i_1,j_1}=...=\theta^{K_q}_{i_q,j_q}$, $q>1$, the eigenspace of $\bm{\Phi}_s$ associated with $\theta^K_{i,j}$ should be the direct sum of all the eigenspace about $\theta^K_{i,j}$, i.e., $M(\theta^K_{i,j}|\bm{\Phi}_s)=\oplus_{l=1}^q V(\theta^{K_l}_{i_l,j_l})$.
\end{lemma}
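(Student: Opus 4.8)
The plan is to reduce the lemma to the single-matrix eigenspace decomposition already established in Lemma \ref{two-driver-vector} (imported from \cite{hao2019new}). Under the homogeneity assumptions $A^1=A^2$, $B^1=B^2$, $C^1=C^2$ and $H^1=H^2=P^{2,1}$, the matrix $\bm{\Phi}_s$ collapses to the single Kronecker-sum form $\bm{\Phi}_s=I_{2N}\otimes e^{Ah}+\bar{W}\otimes\mathcal{H}(h)$, which is structurally identical to a single-layer block $\Phi^{K,K}_s=I_N\otimes e^{A^Kh}+W^K\otimes\mathcal{H}^K(h)$, only with $\bar{W}\in\mathbb{R}^{2N\times 2N}$ playing the role of $W^K$. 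Consequently, the convolution-type eigenvector construction of Lemma \ref{two-driver-vector} applies verbatim, provided one supplies the left Jordan chains of $\bar{W}$ in place of those of $W^K$. The whole problem therefore reduces to computing the left Jordan chains of $\bar{W}$, after which the stated $\eta$-formulas follow by substitution.

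First I would exploit the block lower-triangular structure $\bar{W}=\begin{bmatrix}W^1&O\\D^{2,1}&W^2\end{bmatrix}$ together with the disjointness $\sigma(W^1)\cap\sigma(W^2)=\emptyset$. Writing a generalized left eigenvector as $[w_a,w_b]$ with $w_a,w_b\in\mathbb{C}^{1\times N}$, the identity $[w_a,w_b]\bar{W}=[w_aW^1+w_bD^{2,1},\,w_bW^2]$ splits every Jordan-chain relation into a first-block and a second-block equation. For an eigenvalue $\lambda_i^1\in\sigma(W^1)$, the second-block relation reads $w_b(W^2-\lambda_i^1I_N)=(\text{previous }w_b)$; since $\lambda_i^1\notin\sigma(W^2)$, induction up the chain forces $w_b=\mathbf{0}$ at every level, so the chain reduces to a left Jordan chain $[v_i^1(k),\mathbf{0}]$ of $W^1$. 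For $\lambda_i^2\in\sigma(W^2)$, the second-block relation yields the ordinary left Jordan chain $v_i^2(k)$ of $W^2$, while the first-block relation becomes $w_a^{(k)}(W^1-\lambda_i^2I_N)=w_a^{(k-1)}-v_i^2(k)D^{2,1}$. Because $\lambda_i^2\notin\sigma(W^1)$, the matrix $W^1-\lambda_i^2I_N$ is invertible, and solving this recursion gives precisely $w_a^{(k)}=v_i^{2,1}(k)$ with the base case and recurrence stated in the lemma.

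With these chains in hand, substituting $\mu_i^1(k)=[v_i^1(k),\mathbf{0}]$ and $\mu_i^2(k)=[v_i^{2,1}(k),v_i^2(k)]$ into the convolution formula of Lemma \ref{two-driver-vector} reproduces the claimed bases of $M(\theta^1_{i,j}|\bm{\Phi}_s)$ and $M(\theta^2_{i,j}|\bm{\Phi}_s)$, with $\beta^K_{i,j}=\min\{\alpha^K_i,\gamma^K_{i,j}\}$; the coincident-eigenvalue case $\theta^{K_1}_{i_1,j_1}=\cdots=\theta^{K_q}_{i_q,j_q}$ is then handled exactly as in Lemma \ref{two-driver-vector} by taking the direct sum of the individual eigenspaces. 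I expect the main obstacle to lie in this second step: correctly deriving the off-diagonal recursion for $v_i^{2,1}(k)$ and verifying that it genuinely produces a Jordan chain of $\bar{W}$ (rather than merely an isolated eigenvector), together with the completeness and linear-independence bookkeeping that guarantees these vectors exhaust each eigenspace. Both rely critically on the disjointness $\sigma(W^1)\cap\sigma(W^2)=\emptyset$, which decouples the two diagonal blocks and supplies the invertibility of $W^1-\lambda_i^2I_N$; completeness itself is inherited from the dimension-counting argument underlying the construction in \cite{hao2019new}.
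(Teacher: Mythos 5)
Your proposal is correct and follows essentially the same route as the paper: the paper's proof builds the block transformation $\bar{T}$ whose off-diagonal block $T^{2,1}$ solves the Sylvester equation $J^2T^{2,1}-T^{2,1}W^1=T^2D^{2,1}$, and row by row this is exactly your recursion $v_i^{2,1}(k)=(v_i^{2,1}(k-1)-v_i^2(k)D^{2,1})(W^1-\lambda_i^2I_N)^{-1}$ obtained from the triangular structure of $\bar{W}$ and the spectral disjointness $\sigma(W^1)\cap\sigma(W^2)=\emptyset$. The only organizational difference is that you invoke Lemma~\ref{two-driver-vector} as a black box applied to $\bar{W}$, whereas the paper re-derives the generalized-Jordan-chain eigenvector construction on the block-diagonalized matrix $I_{2N}\otimes e^{Ah}+\bar{J}\otimes\mathcal{H}(h)$ and pulls the eigenvectors back through $\bar{T}_i^K\otimes I_n$.
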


\begin{proof}
It is obvious that $\sigma(\bar{W})=\sigma(W^1)\cup\sigma(W^2)$.
Define 
$$\bar{T}=\begin{bmatrix}
T^1 & O\\
T^{2,1} & T^2
\end{bmatrix}, 
T^K=\begin{bmatrix}
        v^K_1(\alpha_1^K)\\
        \vdots\\
        v^K_1(1)\\
        \vdots\\
        v^K_{r^K}(\alpha_{r^K}^K)\\
        \vdots\\
        v^K_{r^K}(1)\end{bmatrix},
    T^{2,1}=\begin{bmatrix}
        v^{2,1}_1(\alpha_1^2)\\
        \vdots\\
        v^{2,1}_1(1)\\
        \vdots\\
        v^{2,1}_{r^{2}}(\alpha_{r^2}^2)\\
        \vdots\\
        v^{2,1}_{r^2}(1)\end{bmatrix},$$
$K=1,2$.
Let the invertible matrices $T^1,T^2$ satisfies $T^1W^1(T^1)^{-1}=J^1$, $T^2W^2(T^2)^{-1}=J^2$, and $J^2T^{2,1}-T^{2,1}W^1=T^2D^{2,1}$.
Assume that $\bar{J}=diag\{J^1,J^2\},$ where $J^K$ is the Jordan normal form of $W^K$, and $J^K=diag\{J^K_1,...,J^K_{r^K}\}$.
One has 
$$\begin{aligned}
&(\bar{T}\otimes{I_n})\bm{\Phi}_s(\bar{T}^{-1}\otimes{I_n})\\
=&(\begin{bmatrix}
T^1 & O\\
T^{2,1} & T^2 \end{bmatrix}\otimes{I_n})(I_{2N}\otimes{e}^{Ah}+\bar{W}\otimes\mathcal{H}(h))\\&(\begin{bmatrix}
(T^1)^{-1} & O\\
-(T^2)^{-1}T^{2,1}(T^1)^{-1} & (T^2)^{-1} \end{bmatrix}\otimes{I_n})\\
=&I_{2N}\otimes{e}^{Ah}+\bar{J}\otimes\mathcal{H}(h)\\
=&blockdiag\{F_1^1,...,F^1_{r^1},F_1^2,...,F^2_{r^2}\}.
\end{aligned}$$
where 
$$\begin{aligned}
    F_i^K&=I_{\alpha_i^K}\otimes{e}^{Ah}+J_i^K\otimes\mathcal{H}(h)\\
    &=\left[\begin{matrix}
E_i^K & \mathcal{H}(h) & & \\
         &\ddots &\ddots & \\
         &       & E_i^K & \mathcal{H}(h)\\
         &       &          & E_i^K \end{matrix}\right]_{(\alpha_i^Kn\times{\alpha_i^Kn})}\\
\end{aligned},$$
$i=1,...,r^K,K=1,2.$
The left eigenvectors of $F_i^K$ associated with $\theta_{i,j}^K$ are $e_{\alpha_i^K}\otimes\xi_{i,j}^K(1), e_{\alpha_i^K-1}\otimes\xi_{i,j}^K(1)+e_{\alpha_i^K}\otimes\xi_{i,j}^K(2),...,e_{\alpha_i^K-\beta_{i,j}^K+1}\otimes\xi_{i,j}^K(1)+e_{\alpha_i^K-\beta_{i,j}^K+2}\otimes\xi_{i,j}^K(2)+...+e_{\alpha_i^K}\otimes\xi_{i,j}^K({\beta_{i,j}^K})$, where $\beta_{i,j}^K={min}\{\alpha_i^K,\gamma_{i,j}^K\},i=1,2,...,r^K,j=1,2,...,p_i^K,K=1,2$.
Consider a left eigenvector $\zeta$ of $F_i^K$ associated with $\theta_{i,j}^K$.
It follows that
$$\zeta(\bar{T}_i^K\otimes{I_n})\bm{\Phi}_s=\zeta{F_i^K}(\bar{T}_i^K\otimes{I_n})=\theta_{i,j}^K\zeta(\bar{T}_i^K\otimes{I_n}),$$
which means that $\zeta(\bar{T}_i^K\otimes{I_n})$ is a left eigenvector of $\bm{\Phi}_s$ about $\theta_{i,j}^K$, where $$\bar{T}_i^1=\begin{bmatrix}
                v_i^1({\alpha_i^1}) & 0\\
                \vdots & \vdots\\
                v_i^1({1}) & 0
                \end{bmatrix},
    \bar{T}_i^2=\begin{bmatrix}
                v_i^{2,1}({\alpha_i^2}) & v_i^2({\alpha_i^2})\\
                \vdots & \vdots\\
                v_i^{2,1}({1}) & v_i^2({1})
                \end{bmatrix}.$$
Then the eigenvectors of the state matrix $\bm{\Phi}_s$ of the whole system can be obtained as presented in Lemma \ref{eigen_homo}.
\end{proof}

Based on Lemma \ref{eigen_homo} and PBH rank condition, a sufficient controllability condition for system (\ref{sampled-multi-layer},\ref{homo-two-driver_s}) is given in Theorem \ref{con:two-driver_homo_s}.
The proof is omitted.

\begin{theorem}\label{con:two-driver_homo_s}
The two-layer homogeneous networked sampled-data system with drive-response mode (\ref{sampled-multi-layer},\ref{homo-two-driver_s}) is controllable if $\forall\eta\in{M(\theta^K_{i,j}|\bm{\Phi}_s)}$ and $\eta\neq\mathbf{0}$, $\eta(\bar{\Delta}\otimes\mathcal{B}(h))\neq\mathbf{0}$, for every $K=1,2$, $i=1,...,r^K$ and $j=1,...,p^K(i)$.\par
\end{theorem}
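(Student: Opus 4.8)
The plan is to mirror the argument of Theorem \ref{con:two-driver}, but to exploit the fact that in the homogeneous case the whole state matrix collapses to a single Kronecker form $\bm{\Phi}_s = I_{2N}\otimes e^{Ah} + \bar{W}\otimes\mathcal{H}(h)$, so that one PBH test on the full pair $(\bm{\Phi}_s,\bm{\Psi}_s)$ suffices rather than the two coupled conditions of the heterogeneous case. First I would invoke the discrete-time PBH rank condition: the pair is reachable, and hence controllable, provided $[sI_{2Nn}-\bm{\Phi}_s,\ \bm{\Psi}_s]$ has full row rank for every $s\in\mathbb{C}$. Since $sI_{2Nn}-\bm{\Phi}_s$ is already nonsingular whenever $s\notin\sigma(\bm{\Phi}_s)$, the test only has to be checked at the eigenvalues of $\bm{\Phi}_s$, which by Lemma \ref{lem_homo} are exactly the scalars $\theta^K_{i,j}$.

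Next I would recast the rank condition in left-null-space form: $[\theta I_{2Nn}-\bm{\Phi}_s,\ \bm{\Psi}_s]$ fails to have full row rank at $\theta=\theta^K_{i,j}$ if and only if there is a nonzero row vector $\eta$ with $\eta(\theta I_{2Nn}-\bm{\Phi}_s)=\mathbf{0}$ and $\eta\bm{\Psi}_s=\mathbf{0}$. The first equation says precisely that $\eta$ lies in the left eigenspace $M(\theta^K_{i,j}|\bm{\Phi}_s)$, whose explicit spanning vectors $\eta^K_{i,j}(\cdot)$ are supplied by Lemma \ref{eigen_homo}; the second says $\eta(\bar{\Delta}\otimes\mathcal{B}(h))=\mathbf{0}$. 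The hypothesis of the theorem is exactly the negation of the simultaneous occurrence of these two facts: for every eigenvalue and every nonzero $\eta$ in its eigenspace, $\eta(\bar{\Delta}\otimes\mathcal{B}(h))\neq\mathbf{0}$. Hence no obstructing $\eta$ can exist, the rank is full at every $\theta^K_{i,j}$, and controllability follows.

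The point requiring care is the degenerate case in which several of the $\theta^K_{i,j}$ coincide, whether within one layer, across the two layers, or through a collision among the output-eigenvalues of the $E_i^K$. There the eigenspace is the direct sum $\oplus_{l=1}^q V(\theta^{K_l}_{i_l,j_l})$ of Lemma \ref{eigen_homo}, and a vector $\eta$ in the null space may be a genuine mixture of contributions from different Jordan chains rather than a single $\eta^K_{i,j}(\cdot)$. The statement is deliberately phrased as a quantification over the \emph{entire} eigenspace $M(\theta^K_{i,j}|\bm{\Phi}_s)$, so it already rules out every such mixed vector; I would simply emphasize that ``$\forall\eta\in M(\theta^K_{i,j}|\bm{\Phi}_s)$'' must be read as ranging over this direct sum, so that no extra work is needed beyond the enumeration already carried out in Lemmas \ref{lem_homo} and \ref{eigen_homo}.

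Finally, I would remark that only sufficiency is attainable here, for the same reason already noted in Section \ref{sec:gen}: the discrete-time rank condition characterizes reachability, which implies controllability but is not implied by it when $0\in\sigma(\bm{\Phi}_s)$. Thus the ``if'' in the statement cannot in general be upgraded to ``if and only if'' without the additional assumption $0\notin\sigma(\bm{\Phi}_s)$, exactly paralleling Corollary \ref{nece} for the heterogeneous system.
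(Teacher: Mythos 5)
Your argument is correct and follows exactly the route the paper intends: the paper omits the proof of Theorem \ref{con:two-driver_homo_s} but states that it is ``based on Lemma \ref{eigen_homo} and the PBH rank condition,'' which is precisely your reduction of the full-row-rank test at each $\theta^K_{i,j}$ to the nonexistence of a nonzero left eigenvector $\eta$ annihilating $\bar{\Delta}\otimes\mathcal{B}(h)$. Your added remarks on coincident eigenvalues (handled by the direct-sum eigenspace of Lemma \ref{eigen_homo}) and on why only sufficiency holds are consistent with the paper's own treatment in Lemma \ref{eigen_homo} and Corollary \ref{nece}.
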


The derivation of Theorem \ref{con:two-driver_homo_s} analyzes the topology of the homogeneous two-layer sampled-data system with drive-response mode, and verifies its controllability by taking it as a general networked sampled-data system. 
Example \ref{exa_1} in Section \ref{sec:exa_2} shows the process of verification.

\begin{remark}
In Example \ref{exa_1} in Section \ref{sec:exa_2}, it can be verified that $[sI_2-A,B]$ is of full row rank for $\forall{s}\in\mathbb{C}$, but the controllability of single node systems is lost after the control sampling, for $rank([sI_2-e^{Ah},\mathcal{B}(h)])=1$ when $s=-23.1407$.
That is, $h=\pi$ is pathological about $A$\cite{chen2012optimal}.
However, the pathological sampling of single node systems is eliminated by the inner-couplings among different nodes, and the whole multilayer networked sampled-data system is still controllable.
\end{remark}

Similar to the proof of Corollary \ref{coro_topo}, the following result in Corollary \ref{coro3} can be obtained for the homogeneous system (\ref{sampled-multi-layer},\ref{homo-two-driver_s}).

\begin{corollary}\label{coro3}
If $0\notin\sigma(\bm{\Phi}_s)$, the two-layer networked sampled-data system (\ref{sampled-multi-layer},\ref{homo-two-driver_s}) is controllable only if the topology $(\bar{W},\bar{\Delta})$ is controllable.
\end{corollary}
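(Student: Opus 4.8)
The plan is to argue by contraposition, exploiting the fact that in the homogeneous case the whole topology collapses to the single matrix $\bar{W}$ together with the single input selection $\bar{\Delta}$, so that the awkward per-eigenvalue case analysis of Corollary \ref{coro_topo} can be avoided. Suppose the topology $(\bar{W},\bar{\Delta})$ is \emph{not} controllable. By the PBH criterion applied to this pair, there exist an eigenvalue $\mu\in\sigma(\bar{W})$ and a nonzero left eigenvector $v\in\mathbb{C}^{1\times 2N}$ with $v\bar{W}=\mu v$ and $v\bar{\Delta}=\mathbf{0}$; note that $v$ may be taken to be any vector in the left eigenspace of $\mu$ lying in the left kernel of $\bar{\Delta}$, which handles the case of higher geometric multiplicity without extra work.

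Next I would lift $v$ to an eigenvector of $\bm{\Phi}_s$ via a Kronecker construction. Set $E=e^{Ah}+\mu\mathcal{H}(h)$ (the matrix $E_i^K$ of Lemma \ref{lem_homo} evaluated at $\mu$), choose any $\theta\in\sigma(E)$ together with a nonzero left eigenvector $\xi$, i.e. $\xi E=\theta\xi$, and put $\eta=v\otimes\xi\neq\mathbf{0}$. Using the mixed-product rule for the Kronecker product and $v\bar{W}=\mu v$, a short computation gives
\begin{equation*}
\eta\bm{\Phi}_s=v\otimes\xi e^{Ah}+\mu v\otimes\xi\mathcal{H}(h)=v\otimes\xi(e^{Ah}+\mu\mathcal{H}(h))=v\otimes\theta\xi=\theta\eta,
\end{equation*}
so $\eta$ is a left eigenvector of $\bm{\Phi}_s$ with eigenvalue $\theta$, and in particular $\theta\in\sigma(\bm{\Phi}_s)$.

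The final step tests the input. Since $\bm{\Psi}_s=\bar{\Delta}\otimes\mathcal{B}(h)$, the mixed-product rule yields $\eta\bm{\Psi}_s=(v\bar{\Delta})\otimes(\xi\mathcal{B}(h))=\mathbf{0}$. Hence $\eta[\theta I_{2Nn}-\bm{\Phi}_s,\bm{\Psi}_s]=\mathbf{0}$ with $\eta\neq\mathbf{0}$, so $[\theta I_{2Nn}-\bm{\Phi}_s,\bm{\Psi}_s]$ is row-rank deficient. The one point that genuinely needs the hypothesis $0\notin\sigma(\bm{\Phi}_s)$ is the passage from this rank deficiency to actual uncontrollability: for a discrete-time system the PBH test may be relaxed at $s=0$, but because $\theta\in\sigma(\bm{\Phi}_s)$ and every eigenvalue of $\bm{\Phi}_s$ is nonzero by assumption, $\theta\neq0$, and a rank drop at a nonzero $\theta$ does contradict controllability. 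I expect this discrete-time subtlety --- rather than any algebra --- to be the only real obstacle; the eigenvector lifting itself is a routine Kronecker calculation mirroring Lemma \ref{lem_homo}, and it is exactly the homogeneity (a single $\bar{W}$, a single $A$, $H$, $C$, $B$) that lets the construction proceed uniformly without splitting into the geometric-multiplicity cases required in Corollary \ref{coro_topo}.
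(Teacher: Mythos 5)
Your proof is correct and follows essentially the same route as the paper's (which is given only by reference to Corollary \ref{coro_topo}): contraposition, a Kronecker lift $\eta=v\otimes\xi$ of an uncontrollable-topology left eigenvector to a left eigenvector of $\bm{\Phi}_s$ that annihilates $\bm{\Psi}_s$, and the nonsingularity hypothesis $0\notin\sigma(\bm{\Phi}_s)$ to make the resulting PBH rank drop conclusive for the discrete-time system. Your use of the PBH eigenvector form for $(\bar{W},\bar{\Delta})$ neatly absorbs the geometric-multiplicity case split that the proof of Corollary \ref{coro_topo} carries out explicitly, but the underlying computation is identical.
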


\section{Drive-response Mode of Deep Networked Systems}\label{sec:deep}
As the expansion of Section \ref{sec:2_dri}, the drive-response mode in deep networked sampled-data systems is investigated in this section.
Two typical types of inter-layer structure are considered: chain structure (Fig.\ref{fig:chain}) and star structure (Fig. \ref{fig:star}).
In both types, the inter-layer interactions are one-way from the upstream layers to the downstream layers.

\subsection{Chain Structure}
Consider an $M$-layer networked sampled-data system with the chain inter-layer structure, where $\bm{\Phi}_s$ and $\bm{\Psi}_s$ in (\ref{sampled-multi-layer}) are
\begin{equation}\label{chain_layer}
\begin{aligned}
\bm{\Phi}_s&=\begin{bmatrix}
\Phi^{1,1}_s & O & \cdots & O\\
 \Phi^{2,1}_s & \Phi^{2,2}_s & \ddots & \vdots\\
 \vdots & \ddots & \ddots & O\\
 O & \cdots & \Phi^{M,M-1}_s & \Phi^{M,M}_s
\end{bmatrix},\\
\bm{\Psi}_s&=\begin{bmatrix}
\Psi^{1,1}_s & & & \\
  & \Psi^{2,2}_s & & \\
  &  &  \ddots & \\
  &  &  & \Psi^{M,M}_s
\end{bmatrix}.    
\end{aligned}
\end{equation}
It is easy to find $\sigma(\bm{\Phi}_s)=\sigma(\Phi^{1,1}_s)\cup\sigma(\Phi^{2,2}_s)\cup{...}\cup\sigma(\Phi^{M,M}_s)$.
Assume that $\sigma(\bm{\Phi}_s)=\{\theta_1,...,\theta_p\}$.
Two auxiliary definitions of \textit{incompatible eigenvalue set} and \textit{inter-layer coupling chain} are given as follows.

\begin{definition}\label{def:sigma}
If $\theta_i\in\sigma(\Phi^{j,j}_s)$ only when $j=K$, where $i\in\{1,...,p\},K\in\{1,2,...,M\}$, then $\theta_i\in\sigma^K(\bm{\Phi}_s)$.
Otherwise, if $\theta_i\in\sigma(\Phi^{K_1,K_1}_s)\cap...\cap\sigma(\Phi^{K_q,K_q}_s)$, $1\leq{K_1}<...<{K_q}\leq{M}$, then $\theta_i\in\sigma^{K_q}(\bm{\Phi}_s)$.
$\sigma^{K}(\bm{\Phi}_s)$ is called the $K$th incompatible eigenvalue set of $\bm{\Phi}_s$.
Obviously, $\sigma(\bm{\Phi}_s)=\sigma^1(\bm{\Phi}_s)\cup...\cup\sigma^M(\bm{\Phi}_s)$.
\end{definition}

\begin{definition}
A series of vectors $\xi_1,\xi_2,...,\xi_{K-1},\xi_K$ form an inter-layer coupling chain for $\{\Phi^{1,1}_s,...,\Phi^{K,K}_s\}$ associated with $\{\Phi^{2,1}_s,...,\Phi^{K,K-1}_s\}$ about $\theta_i\in\sigma^K(\bm{\Phi}_s)$, where $i\in\{1,...,p\},K\in\{2,...,M\}$, if:
\noindent
$$\begin{aligned}
    &\xi_K(\theta_i{I}_{Nn}-\Phi^{K,K}_s)=\mathbf{0},\\
    \text{and} \ \ 
&\xi_j(\theta_i{I_{Nn}}-\Phi^{j,j}_s)=\xi_{j+1}\Phi^{j+1,j}_s,\ j\in\{1,...,K-1\}.
\end{aligned}$$
\end{definition}

Then the eigenspace of system (\ref{sampled-multi-layer},\ref{chain_layer}) can be derived, as is shown in Lemma \ref{eigen:chain_layer}.

\begin{lemma}\label{eigen:chain_layer}
If $\theta_i\in\sigma^K(\bm{\Phi}_s)$, where $i\in\{1,...,p\},K\in\{1,...,M\}$, then $M(\theta_i|\bm{\Phi}_s)=span\{\eta^i_1,\eta^i_2,...,\eta^i_{\gamma_i}\}$.\par
(1) If $K=1$: Let $M(\theta_i|\Phi^{1,1}_s)=span\{\xi^i_1,...,\xi^i_{\gamma_i}\}$, then $\eta^i_j=[\xi^i_j,\mathbf{0},...,\mathbf{0}]$, where $j=1,...,\gamma_i$;\par
(2) If $1<K\leq{M}$: $\eta^i_j=[\xi^i_j(1),...,\xi^i_j(K-1),$ $\xi^i_j(K),...,\mathbf{0}]$, where $\xi^i_j(1),...,\xi^i_j(K-1),\xi^i_j(K)$ is the $j$th inter-layer coupling chain for $\{\Phi^{1,1},...,\Phi^{K,K}\}$ associated with $\{\Phi^{2,1}_s,...,\Phi^{K,K-1}_s\}$ about $\theta_i$, $j=1,...,\gamma_i$.\par
\end{lemma}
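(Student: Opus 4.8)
The plan is to characterize the left null space of $\theta_i I_{MNn} - \bm{\Phi}_s$ directly, exploiting the block lower-bidiagonal structure of $\bm{\Phi}_s$ inherited from the chain topology (\ref{chain_layer}). First I would partition any candidate left eigenvector as $\eta = [\eta_1, \ldots, \eta_M]$ with each $\eta_l \in \mathbb{C}^{1 \times Nn}$, and expand $\eta(\theta_i I_{MNn} - \bm{\Phi}_s) = \mathbf{0}$ blockwise. Since $\bm{\Phi}_s$ has nonzero blocks only on the diagonal ($\Phi^{l,l}_s$) and the first subdiagonal ($\Phi^{l,l-1}_s$), the $k$th block equation collapses to $\eta_k(\theta_i I_{Nn} - \Phi^{k,k}_s) = \eta_{k+1}\Phi^{k+1,k}_s$ for $k<M$, together with $\eta_M(\theta_i I_{Nn} - \Phi^{M,M}_s) = \mathbf{0}$ for the last block.

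The crux is to show that every block above the $K$th vanishes. Here I would invoke Definition \ref{def:sigma}: membership $\theta_i \in \sigma^K(\bm{\Phi}_s)$ means $K$ is the \emph{largest} layer index with $\theta_i \in \sigma(\Phi^{K,K}_s)$, so $\theta_i I_{Nn} - \Phi^{j,j}_s$ is nonsingular for every $j > K$. A backward induction then forces $\eta_M = \mathbf{0}$ (from the last block equation and invertibility), then $\eta_{M-1} = \mathbf{0}$ because $\eta_{M-1}(\theta_i I_{Nn} - \Phi^{M-1,M-1}_s) = \eta_M \Phi^{M,M-1}_s = \mathbf{0}$, and so on down to $\eta_{K+1} = \mathbf{0}$.

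With the upper blocks eliminated, the surviving equations for $k = 1, \ldots, K$ read $\eta_K(\theta_i I_{Nn} - \Phi^{K,K}_s) = \mathbf{0}$ and $\eta_k(\theta_i I_{Nn} - \Phi^{k,k}_s) = \eta_{k+1}\Phi^{k+1,k}_s$ for $k<K$, which are exactly the defining relations of an inter-layer coupling chain about $\theta_i$ upon setting $\xi_k = \eta_k$. This shows every eigenvector has the claimed form $[\xi_1, \ldots, \xi_K, \mathbf{0}, \ldots, \mathbf{0}]$. Conversely, padding any coupling chain with $M-K$ zero blocks and rerunning the same block multiplication verifies it lies in the null space, so the two descriptions coincide; choosing a maximal set of $\gamma_i$ linearly independent chains yields the stated spanning set for $M(\theta_i|\bm{\Phi}_s)$. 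The case $K=1$ is the degenerate instance where the backward induction annihilates $\eta_2, \ldots, \eta_M$ outright and $\eta_1$ ranges over $M(\theta_i|\Phi^{1,1}_s)$, giving part (1).

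The main obstacle I anticipate is bookkeeping rather than algebra: one must place $-\Phi^{k+1,k}_s$ in the correct off-diagonal slot of $\theta_i I_{MNn} - \bm{\Phi}_s$ so that left-multiplication couples $\eta_{k+1}$ (not $\eta_{k-1}$) into the $k$th block equation, and one must confirm that the maximality built into the definition of $\sigma^K(\bm{\Phi}_s)$ is precisely what guarantees invertibility of the upper diagonal blocks and hence the vanishing of $\eta_{K+1}, \ldots, \eta_M$. Once these two points are pinned down, the remainder is a direct unwinding of the block recursion.
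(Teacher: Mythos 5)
Your argument is correct, and since the paper states Lemma \ref{eigen:chain_layer} without proof, it supplies exactly the reasoning the paper's own scaffolding anticipates: Definition \ref{def:sigma} (maximality of $K$) exists precisely to make $\theta_i I_{Nn}-\Phi^{j,j}_s$ invertible for $j>K$ so that backward induction kills $\eta_M,\ldots,\eta_{K+1}$, and the surviving block recursion is verbatim the definition of an inter-layer coupling chain. The only point worth pinning down in writing is the final identification of $\gamma_i$: the lemma's basis is obtained by taking a maximal linearly independent set of chains (some $\xi_K\in M(\theta_i|\Phi^{K,K}_s)$ may fail to extend to a chain when $\theta_i$ is also an eigenvalue of a lower-indexed diagonal block), which you handle correctly by defining the eigenspace as the set of all padded chains rather than parametrizing it by $\xi_K$ alone.
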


Based on Lemma \ref{eigen:chain_layer} and the PBH rank condition, Theorem \ref{con:deep} is obtained to verify the controllability of system (\ref{sampled-multi-layer},\ref{chain_layer}).
The proof is omitted.

\begin{theorem}\label{con:deep}
The deep networked sampled-data system with chain inter-layer structure (\ref{sampled-multi-layer},\ref{chain_layer}) is controllable if for every $i=1,...,p$, $\forall{\eta}\in{M}(\theta_i|\bm{\Phi}_s)$ and $\eta\neq\mathbf{0}$, $\eta\bm{\Psi}_s\neq\mathbf{0}$.
\end{theorem}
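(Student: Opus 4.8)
The plan is to read the statement off the Popov--Belevitch--Hautus (PBH) rank condition, exactly as in the proofs of Theorem~\ref{con:two-driver} and Theorem~\ref{con:two-driver_homo_s}, and to supply the eigenspace data from Lemma~\ref{eigen:chain_layer}. Because the system is discrete-time I would only assert the rank test as a \emph{sufficient} criterion: demanding that $[\theta I_{MNn}-\bm{\Phi}_s,\,\bm{\Psi}_s]$ have full row rank for every $\theta\in\sigma(\bm{\Phi}_s)$ is equivalent to reachability, and reachability implies controllability, the gap being precisely the singular-$\bm{\Phi}_s$ situation noted at the start of Section~\ref{sec:gen}.

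First I would argue by contraposition. Suppose the system fails to be controllable; then the full-row-rank condition fails at some $\theta\in\sigma(\bm{\Phi}_s)$, so there is a nonzero row vector $\eta$ with $\eta[\theta I_{MNn}-\bm{\Phi}_s,\,\bm{\Psi}_s]=\mathbf{0}$. Splitting the two blocks gives $\eta(\theta I_{MNn}-\bm{\Phi}_s)=\mathbf{0}$ together with $\eta\bm{\Psi}_s=\mathbf{0}$. The first identity says that $\eta$ is a nonzero left eigenvector of $\bm{\Phi}_s$ for $\theta$, i.e. $\eta\in M(\theta|\bm{\Phi}_s)$. The block lower-triangular form of $\bm{\Phi}_s$ forces $\sigma(\bm{\Phi}_s)=\bigcup_{K=1}^{M}\sigma(\Phi^{K,K}_s)=\{\theta_1,\dots,\theta_p\}$, so $\theta=\theta_i$ for some $i\in\{1,\dots,p\}$, and $\eta$ is then a nonzero element of $M(\theta_i|\bm{\Phi}_s)$ annihilating $\bm{\Psi}_s$. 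This contradicts the hypothesis, which states $\eta\bm{\Psi}_s\neq\mathbf{0}$ for every nonzero $\eta\in M(\theta_i|\bm{\Phi}_s)$. Hence the rank test holds at every eigenvalue and the system is controllable.

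The actual content, and where I expect the real work to sit, is not in this contradiction but in ensuring that the hypothesis is tested against the \emph{entire} left eigenspace $M(\theta_i|\bm{\Phi}_s)$, which is exactly what Lemma~\ref{eigen:chain_layer} provides: it exhibits a spanning set $\{\eta^i_1,\dots,\eta^i_{\gamma_i}\}$ built from the inter-layer coupling chains $\xi^i_j(1),\dots,\xi^i_j(K)$, so that an arbitrary nonzero $\eta$ in the eigenspace is a complex combination of these chain-structured vectors. The delicate case is when $\theta_i$ lies in several diagonal spectra simultaneously; Definition~\ref{def:sigma} handles this by assigning $\theta_i$ to the incompatible set $\sigma^{K}(\bm{\Phi}_s)$ of its highest occupied layer, whereupon Lemma~\ref{eigen:chain_layer} takes $M(\theta_i|\bm{\Phi}_s)$ to be the direct sum of the corresponding chain-generated subspaces. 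I would then make the blockwise structure explicit: since $\bm{\Psi}_s$ is block diagonal and each chain vector has the form $\eta^i_j=[\xi^i_j(1),\dots,\xi^i_j(K),\mathbf{0},\dots,\mathbf{0}]$, we get $\eta^i_j\bm{\Psi}_s=[\xi^i_j(1)\Psi^{1,1}_s,\dots,\xi^i_j(K)\Psi^{K,K}_s,\mathbf{0},\dots,\mathbf{0}]$, so that the condition $\eta\bm{\Psi}_s\neq\mathbf{0}$ becomes a verifiable statement about the chains layer by layer. With Lemma~\ref{eigen:chain_layer} in hand no further computation is needed, and the argument is the same PBH contradiction as in the two-layer drive-response case, merely read over the chain decomposition instead of a single drive-response pair.
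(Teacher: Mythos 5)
Your proposal is correct and follows exactly the route the paper indicates for its (omitted) proof: the PBH rank test applied as a sufficient condition for controllability of the discrete-time system, combined with the eigenspace characterization from Lemma~\ref{eigen:chain_layer}, mirroring the contradiction argument used for Theorem~\ref{con:two-driver}. Your additional remarks on reachability versus controllability and on the blockwise form of $\eta^i_j\bm{\Psi}_s$ are accurate and consistent with the paper's discussion in Section~\ref{sec:gen} and the remark following Theorem~\ref{con:deep_2}.
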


\subsection{Star Structure}
Consider an $M$-layer networked sampled-data system with the star inter-layer structure, where $\bm{\Phi}_s$ and $\bm{\Psi}_s$ in (\ref{sampled-multi-layer}) are:
\begin{equation}\label{star_layer}
\begin{aligned}
\bm{\Phi}_s&=\begin{bmatrix}
\Phi^{1,1}_s & O & \cdots & O\\
 \Phi^{2,1}_s & \Phi^{2,2}_s & \ddots & \vdots\\
 \vdots &  & \ddots & O\\
 \Phi^{M,1}_s & O &  & \Phi^{M,M}_s
\end{bmatrix},\\
\bm{\Psi}_s&=\begin{bmatrix}
\Psi^{1,1}_s & & & \\
  & \Psi^{2,2}_s & & \\
  &  &  \ddots & \\
  &  &  & \Psi^{M,M}_s
\end{bmatrix}.
\end{aligned}
\end{equation}
Still, $\sigma(\bm{\Phi}_s)=\sigma(\Phi^{1,1}_s)\cup\sigma(\Phi^{2,2}_s)\cup{...}\cup\sigma(\Phi^{M,M}_s)$.
Assume that $\sigma(\bm{\Phi}_s)=\{\theta_1,...,\theta_p\}$.
Similar to the case of chain structure, two definitions of \textit{center-incompatible eigenvalue set} and \textit{inter-layer coupling group} are given as follows.

\begin{definition}\label{def:sigma2}
If $\theta_i\in\sigma(\Phi^{1,1}_s)\cap\sigma(\Phi^{K,K}_s)$, where $i\in\{1,...,p\},K\in\{2,...,M\}$, then $\theta_i\in\sigma^K(\bm{\Phi}_s)$.
Otherwise, if $\theta_i\in\sigma(\Phi^{K,K}_s),K\in\{1,2,...,M\}$, then $\theta_i\in\sigma^K(\bm{\Phi}_s)$.
$\sigma^{K}(\bm{\Phi}_s)$ is called the $K$th center-incompatible eigenvalue set of $\bm{\Phi}_s$.
Obviously, $\sigma(\bm{\Phi}_s)=\sigma^1(\bm{\Phi}_s)\cup...\cup\sigma^M(\bm{\Phi}_s)$.
\end{definition}

\begin{figure}[htb]
\centering
\subfloat[]{\includegraphics[width=1.4in]{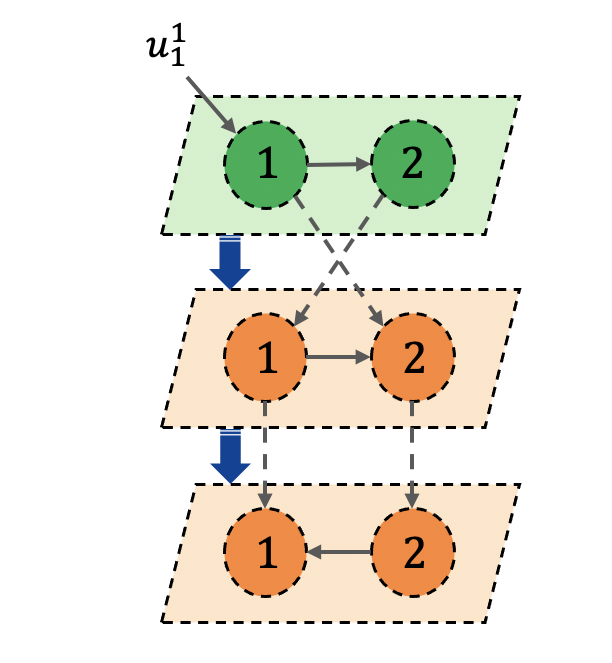}%
\label{fig:chain}}
\hfil
\subfloat[]{\includegraphics[width=1.95in]{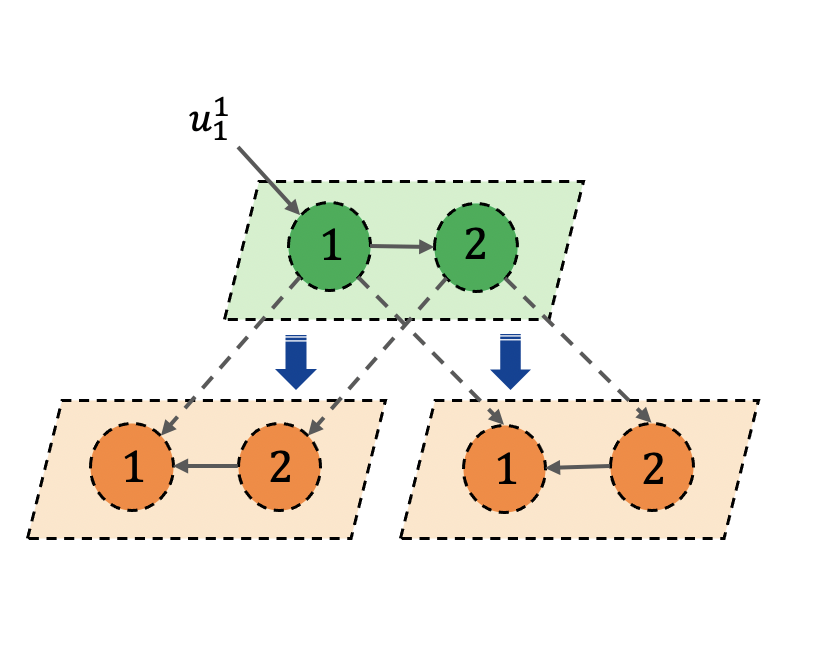}%
\label{fig:star}}
\caption{Two types of deep networked systems. (a) Chain structure. (b) Star structure.}
\label{fig:deep}
\end{figure}

\begin{definition}
A series of vectors $\xi_1,\xi_{K_1},\xi_{K_2},...,\xi_{K_l}$ constitute an inter-layer coupling group certering in $\xi_1$ for $\{\Phi^{1,1}_s,\Phi^{K_1,K_1}_s...,\Phi^{K_l,K_l}_s\}$ associated with $\{\Phi^{K_1,1}_s,...,\Phi^{K_l,1}_s\}$ about $\theta_i\in\sigma^{K_1}(\bm{\Phi}_s)\cup...\cup\sigma^{K_l}(\bm{\Phi}_s)$, where $i\in\{1,...,p\},1<K_1<...<K_l\leq{M}$, if:
\noindent
$$\begin{aligned}
    &\xi_{K_j}\in{M}(\theta_i|\Phi^{K_j,K_j}_s),j=1,...,l\\
    \text{and} \ \
&\xi_1(\theta_i{I}_{Nn}-\Phi^{1,1}_s)=\xi_{K_1}\Phi^{K_1,1}_s+...+\xi_{K_l}\Phi^{K_l,1}_s.
\end{aligned}$$
\end{definition}

Then the eigenspace of system (\ref{sampled-multi-layer},\ref{star_layer}) can be derived, as is shown in Lemma \ref{eigen_star_layer}.

\begin{lemma}\label{eigen_star_layer}
(1) If $\theta_i\in\sigma^1(\bm{\Phi}_s)$: Let $M(\theta_i|\Phi^{1,1}_s)=span\{\xi^i_1,...,\xi^i_{\gamma_i}\}$, then $M(\theta_i|\bm{\Phi}_s)=span\{\eta^i_1,...,\eta^i_{\gamma_i}\}$, where $\eta^i_j=[\xi^i_j,\mathbf{0},...,\mathbf{0}],j=1,...,\gamma_i$;\par
(2) Otherwise, if $\theta_i\in\sigma^{K_1}(\bm{\Phi}_s)\cap...\cap\sigma^{K_l}(\bm{\Phi}_s)$, $1<K_1<...<K_l\leq{M}$, then $M(\theta_i|\bm{\Phi}_s)=span\{\eta^i_1,...,\eta^i_{\zeta_i}\}$, where $\eta^i_j=[\xi^i_j(1),\mathbf{0},...,\xi^i_j(K_1),\mathbf{0},...,\xi^i_j(K_l),...,\mathbf{0}]$, $\xi^i_j(K_q)\in{M}(\theta_i|\Phi^{K_q,K_q}_s),q=1,...,l$,
and $\xi^i_j(1),\xi^i_j(K_1),...,\xi^i_j(K_l)$ is the $j$th inter-layer coupling group for $\{\Phi^{1,1}_s,\Phi^{K_1,K_1}_s...,\Phi^{K_l,K_l}_s\}$ associated with $\{\Phi^{K_1,1}_s,...,\Phi^{K_l,1}_s\}$ about $\theta_i$ centering in $\xi^i_j(1)$, $j=1,...,\zeta_i$.
\end{lemma}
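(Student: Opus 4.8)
The plan is to characterize the left eigenspace $M(\theta_i|\bm{\Phi}_s)$ directly from the block structure of $\bm{\Phi}_s$ in (\ref{star_layer}), exactly as in the chain case of Lemma \ref{eigen:chain_layer}. First I would note that, since $\bm{\Phi}_s$ is block lower-triangular, $\sigma(\bm{\Phi}_s)=\bigcup_{K=1}^M\sigma(\Phi^{K,K}_s)$, so fixing $\theta_i\in\sigma(\bm{\Phi}_s)$ is meaningful. Writing a candidate left eigenvector as $\eta=[\eta_1,\dots,\eta_M]$ with $\eta_K\in\mathbb{C}^{1\times Nn}$ and expanding $\eta(\theta_iI_{MNn}-\bm{\Phi}_s)=\mathbf{0}$ column-block by column-block, the star pattern (only the diagonal blocks $\Phi^{K,K}_s$ and the first-column blocks $\Phi^{K,1}_s$ are nonzero) yields two families of block equations: for every $j\geq 2$, $\eta_j(\theta_iI_{Nn}-\Phi^{j,j}_s)=\mathbf{0}$, i.e. $\eta_j\in M(\theta_i|\Phi^{j,j}_s)$; and for the center, $\eta_1(\theta_iI_{Nn}-\Phi^{1,1}_s)=\sum_{j=2}^M\eta_j\Phi^{j,1}_s$. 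These are precisely the defining relations of an inter-layer coupling group centering in $\eta_1$.

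Then I would split into the two cases of the statement. In case (1), $\theta_i\in\sigma^1(\bm{\Phi}_s)$ is exclusive to the center, so $\theta_i\notin\sigma(\Phi^{j,j}_s)$ for every $j\geq 2$; hence $M(\theta_i|\Phi^{j,j}_s)=\{\mathbf{0}\}$ forces $\eta_j=\mathbf{0}$ for all $j\geq2$, the center equation collapses to $\eta_1\in M(\theta_i|\Phi^{1,1}_s)$, and the eigenvectors are exactly $[\xi^i_j,\mathbf{0},\dots,\mathbf{0}]$ as claimed. In case (2), the same peripheral equations force $\eta_j=\mathbf{0}$ for every $j\geq2$ outside $\{K_1,\dots,K_l\}$ and $\eta_{K_q}\in M(\theta_i|\Phi^{K_q,K_q}_s)$ for each $q$; substituting into the center equation gives the inhomogeneous coupling relation defining the group, whose solutions $\eta_1$ together with the peripheral blocks assemble into the vectors $\eta^i_j$ of the claim. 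Collecting a maximal independent family of coupling groups then yields the spanning set $\{\eta^i_1,\dots,\eta^i_{\zeta_i}\}$.

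The hard part will be the solvability and dimension bookkeeping of the center equation $\eta_1(\theta_iI_{Nn}-\Phi^{1,1}_s)=\sum_q\eta_{K_q}\Phi^{K_q,1}_s$. When $\theta_i\notin\sigma(\Phi^{1,1}_s)$ the operator $\theta_iI_{Nn}-\Phi^{1,1}_s$ is invertible, so each admissible choice of peripheral eigenvectors $(\eta_{K_1},\dots,\eta_{K_l})$ determines a unique $\eta_1$, the correspondence is a bijection, and $\zeta_i=\sum_q\dim M(\theta_i|\Phi^{K_q,K_q}_s)$ follows immediately. The delicate subcase is $\theta_i\in\sigma(\Phi^{1,1}_s)$ shared with the center: here the center equation need not be solvable for every peripheral choice, and I would impose the Fredholm consistency condition that $\sum_q\eta_{K_q}\Phi^{K_q,1}_s$ annihilate the right null space of $\theta_iI_{Nn}-\Phi^{1,1}_s$, while also accounting for the homogeneous freedom $\eta_1\mapsto\eta_1+M(\theta_i|\Phi^{1,1}_s)$. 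I expect this to be the step requiring the most care, both to guarantee that enough coupling groups exist to span $M(\theta_i|\bm{\Phi}_s)$ and to avoid double-counting the purely central contributions already recorded by the $\sigma^1$ bookkeeping. A clean way to organize it is to exhibit $\bar{T}\otimes I_n$-type block operations as in the proof of Lemma \ref{eigen_homo}, reducing $\bm{\Phi}_s$ toward block-diagonal form and reading off the transformed Jordan structure, with the shared-spectrum obstruction handled through the center-incompatible partition of Definition \ref{def:sigma2} rather than through a disjointness assumption on the spectra.
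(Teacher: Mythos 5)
Your proposal is correct and follows the same route the paper takes for its analogous results (the blockwise left-eigenvector expansion of $\eta(\theta_i I-\bm{\Phi}_s)=\mathbf{0}$ used in the proofs of Theorem \ref{con:two-driver} and Lemma \ref{eigen_homo}); the paper itself omits the proof of Lemma \ref{eigen_star_layer}, so there is nothing more specific to compare against. Your identification of the delicate subcase ($\theta_i$ shared with the center layer, handled via the Fredholm solvability condition on $\theta_i I_{Nn}-\Phi^{1,1}_s$ and the center-incompatible partition of Definition \ref{def:sigma2}) is exactly the point the paper glosses over.
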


Based on Lemma \ref{eigen_star_layer}, the controllability of system (\ref{sampled-multi-layer},\ref{star_layer}) can be verified by Theorem \ref{con:deep_2}.
The proof is omitted.

\begin{theorem}\label{con:deep_2}
The deep networked sampled-data system with star inter-layer structure (\ref{sampled-multi-layer},\ref{star_layer}) is controllable if for every $i=1,...,p$, $\forall{\eta}\in{M}(\theta_i|\bm{\Phi}_s)$ and $\eta\neq\mathbf{0}$, $\eta\bm{\Psi}_s\neq\mathbf{0}$.
\end{theorem}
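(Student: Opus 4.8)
The plan is to mirror the arguments already used for Theorem \ref{con:two-driver} and Theorem \ref{con:deep}, combining the PBH rank condition from Theorem 1 with the explicit eigenspace description supplied by Lemma \ref{eigen_star_layer}. By the PBH criterion, the star-structured system (\ref{sampled-multi-layer},\ref{star_layer}) is controllable provided that $[\theta I_{MNn}-\bm{\Phi}_s,\bm{\Psi}_s]$ has full row rank for every $\theta\in\sigma(\bm{\Phi}_s)$, and I would establish this by contraposition.

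First I would suppose the rank condition fails at some eigenvalue $\theta_i$, so that there exists a nonzero row vector $\eta=[\eta^1,\dots,\eta^M]$, partitioned conformally with the $M$ layers, satisfying $\eta(\theta_i I_{MNn}-\bm{\Phi}_s)=\mathbf{0}$ and $\eta\bm{\Psi}_s=\mathbf{0}$ simultaneously. The first equation says precisely that $\eta$ is a left eigenvector of $\bm{\Phi}_s$ associated with $\theta_i$, that is, $\eta\in M(\theta_i|\bm{\Phi}_s)$. Since $\eta\neq\mathbf{0}$, the hypothesis of the theorem yields $\eta\bm{\Psi}_s\neq\mathbf{0}$, contradicting the second equation. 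Hence no such $\eta$ exists, the rank condition holds at every $\theta_i\in\sigma(\bm{\Phi}_s)$, and the system is controllable.

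The substantive content lives in identifying the left null space of $\theta_i I_{MNn}-\bm{\Phi}_s$ with the span furnished by Lemma \ref{eigen_star_layer}, and this is where I expect the main obstacle. When $\theta_i\in\sigma^1(\bm{\Phi}_s)$ only, the block lower-triangular star form forces the eigenvector to be supported on the first (center) block, yielding case (1). The delicate case is when $\theta_i$ is simultaneously an eigenvalue of several outer diagonal blocks $\Phi^{K_1,K_1}_s,\dots,\Phi^{K_l,K_l}_s$: there each outer component $\xi^i_j(K_q)$ must lie in $M(\theta_i|\Phi^{K_q,K_q}_s)$, while the center component $\xi^i_j(1)$ is pinned down by the inter-layer coupling group equation $\xi^i_j(1)(\theta_i I_{Nn}-\Phi^{1,1}_s)=\sum_{q=1}^{l}\xi^i_j(K_q)\Phi^{K_q,1}_s$, which is exactly the back-substitution dictated by the off-diagonal coupling blocks $\Phi^{K_q,1}_s$.

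Because the outer layers do not couple to one another in the star topology, these constraints decouple across the index $q$ and assemble into the direct sum claimed in Lemma \ref{eigen_star_layer}; this is what guarantees that the span there is the \emph{complete} left eigenspace and that no eigenvector is omitted. Granting this completeness, the contradiction above is airtight, so the proof reduces to verifying that Lemma \ref{eigen_star_layer} exhausts $M(\theta_i|\bm{\Phi}_s)$, after which the PBH step is routine and structurally identical to the proof of Theorem \ref{con:two-driver}.
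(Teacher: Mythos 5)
Your proposal is correct and follows exactly the route the paper intends: the paper omits the proof of Theorem \ref{con:deep_2} precisely because, once $M(\theta_i|\bm{\Phi}_s)$ is taken to be the full left eigenspace (with Lemma \ref{eigen_star_layer} supplying its explicit basis), the hypothesis is equivalent to full row rank of $[\theta_i I_{MNn}-\bm{\Phi}_s,\bm{\Psi}_s]$ at every eigenvalue, and the PBH contradiction argument is structurally identical to the one written out for Theorem \ref{con:two-driver}. You also correctly identify that the only substantive burden is the completeness of the eigenspace description in Lemma \ref{eigen_star_layer}, which the theorem's statement already presupposes.
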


\begin{remark}
Similar to the issue that Corollary \ref{nece} mentions above, if the state matrix $\bm{\Phi}_s$ of the deep networked sampled-data system with star (\ref{sampled-multi-layer},\ref{star_layer}) or chain (\ref{sampled-multi-layer},\ref{chain_layer}) inter-layer structure is non-singular, the condition in Theorem \ref{con:deep} and \ref{con:deep_2} become necessary and sufficient.
The discussion and proof is omitted here.
\end{remark}

\section{Multi-rate Sampling Patterns}\label{sec:multi}
Apart from the single-rate-sampled-data systems discussed above, the sampling periods on different channels can be different in multi-rate-sampled-data systems.
Consider two different sampling periods, $lh$ and $h,$ where $l\in\mathbb{N}^+$.
By simple algebraic transformation, the multi-rate-sampled-data system can be written in a compact form with interval $lh$:
\begin{equation}\label{mul_rate_lh}
    \mathbf{X}((kl+l)h)=\tilde{\bm{\Phi}}_s\mathbf{X}(klh)+\tilde{\bm{\Psi}}_s\tilde{\mathbf{U}}(klh),
\end{equation}
where
\begin{equation}\label{mul_rate_lh_detail}
    \tilde{\bm{\Phi}}_s=\begin{bmatrix}
    \tilde{\Phi}_s^{1,1} & O\\
    \tilde{\Phi}_s^{2,1} & \tilde{\Phi}_s^{2,2}
    \end{bmatrix},
    \tilde{\bm{\Psi}}_s=\begin{bmatrix}
    \tilde{\Psi}_s^{1,1} & O\\
    O & \tilde{\Psi}_s^{2,2}
    \end{bmatrix}.
\end{equation}

As is shown in Example \ref{exa_3}-\ref{exa_4} in Section \ref{sec:exa_34}, the change of the sampling rate on some channels may lay a positive or negative effect on the controllability of the whole multilayer networked sampled-data system.
However, it will cause a large computation burden if each time the sampling rate is modified, the system matrix and input matrix of the whole system are required to be recalculated, as well as the re-verification of the rank criterion.
To solve this problem, in this section, three typical multi-rate sampling patterns (Slow Inter-layer, Multi-scale, and Fast Control) are studied and some controllability conditions are given based on the original single-rate multilayer sampled-data networked system, which are easier-to-verify.

\begin{figure*}[!t]
\centering
\subfloat[]{\includegraphics[width=1.5in]{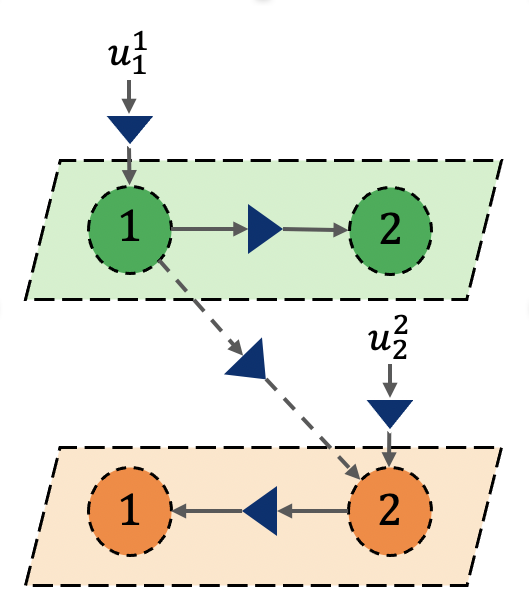}%
\label{fig:single}}
\hfil
\subfloat[]{\includegraphics[width=1.5in]{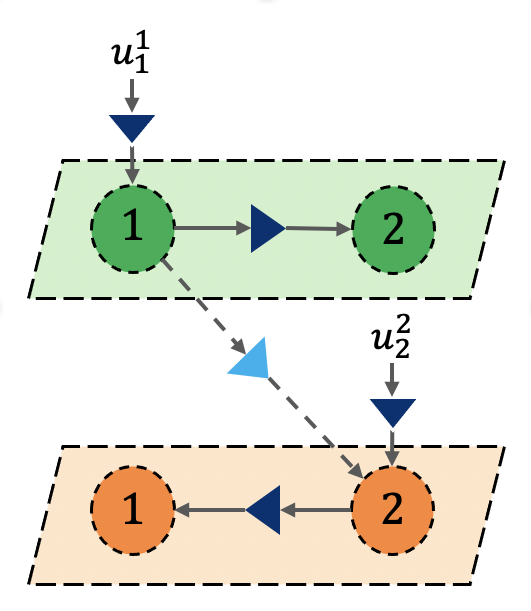}%
\label{fig:slow_inter}}
\hfil
\subfloat[]{\includegraphics[width=1.5in]{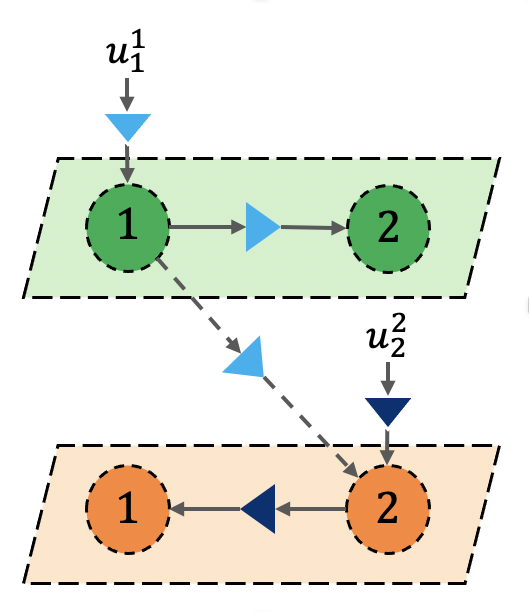}%
\label{fig:multi-scale}}
\hfil
\subfloat[]{\includegraphics[width=1.5in]{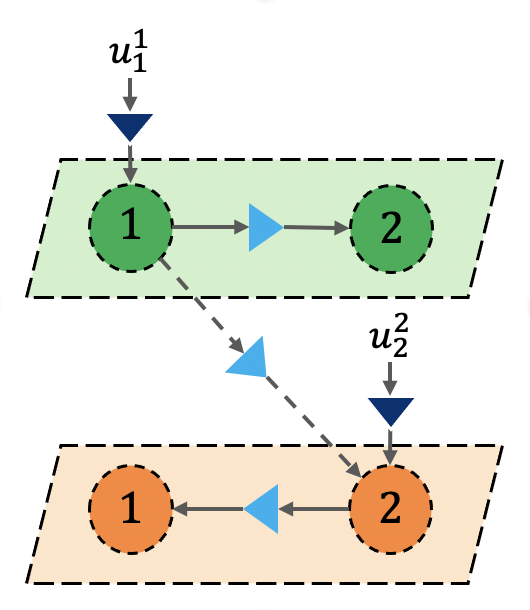}%
\label{fig:fast-con}}
\caption{Single-rate sampling pattern (a) and three multi-rate sampling patterns in the two-layer networked sampled-data system with drive-response mode: (b) Slow Inter-layer, (c) Multi-scale, and (d) Fast Control. The sky-blue (light gray) and navy-blue (dark gray) triangles represent samplers of interval $lh$ and $h$, respectively.}
\label{fig_sim}
\end{figure*}

\subsection{Slow Inter-layer Sampling Pattern}
Assume that the sampling on the inter-layer transmission channels are slower than that on other channels, as is shown in Fig.\ref{fig:slow_inter}, where the sampling period on the inter-layer transmission channels is $lh$, while the sampling period on other channels is $h$.
Perform the Slow Inter-layer sampling pattern, and the dynamics of the multi-rate sampled-data system can be described as:
$$\begin{aligned}
\dot{X}^1&(t)=(I_{N}\otimes{A^1})X^1(t)+(W^1\otimes{H^1C^1})X^1((kl+r)h)\\&+(\Delta^1\otimes{B^1})U^1((kl+r)h),\\
\dot{X}^2&(t)=(I_{N}\otimes{A^2})X^2(t)+(W^2\otimes{H^2C^2})X^2((kl+r)h)\\&+(\Delta^2\otimes{B^2})U^2((kl+r)h)+(D^{2,1}\otimes{P^{2,1}C^1})X^1(klh),\\
\end{aligned}$$
where $t\in((kl+r)h,(kl+r+1)h]$, $r=0,1,...,l-1$ and $k\in\mathbb{N}$.
Thus, in the multi-rate sampled-data system (\ref{mul_rate_lh},\ref{mul_rate_lh_detail}):
\begin{equation}\label{eq:slow-inter}
\begin{aligned}
    &\tilde{\Phi}_s^{1,1}=(\Phi^{1,1}_s)^l, \ \tilde{\Phi}_s^{2,2}=(\Phi^{2,2}_s)^l,\\
    &\tilde{\Phi}_s^{2,1}=((\Phi_s^{2,2})^{l-1}+...+\Phi_s^{2,2}+I_{Nn})\Phi_s^{2,1},\\
    &\tilde{\Psi}_s^{1,1}=[(\Phi^{1,1}_s)^{l-1}\Psi_s^{1,1},...,\Phi_s^{1,1}\Psi_s^{1,1},\Psi_s^{1,1}],\\
    &\tilde{\Psi}_s^{2,2}=[(\Phi^{2,2}_s)^{l-1}\Psi_s^{2,2},...,\Phi_s^{2,2}\Psi_s^{2,2},\Psi_s^{2,2}],
\end{aligned}
\end{equation}
Note that $\Phi_s^{K,K},\Psi_s^{K,K},K=1,2$ and $\Phi_s^{2,1}$ are the same as that in equation (\ref{two-layer-driver_s}).
$\tilde{\mathbf{U}}(klh)=[(\tilde{\mathbf{U}}^1(klh))^\top,(\tilde{\mathbf{U}}^2(klh))^\top]^\top$, where $\tilde{\mathbf{U}}^K(klh)=[(U^K(klh))^\top,(U^K((kl+1)h))^\top,...,$ $(U^K((kl+l-1)h))^\top]^\top,K=1,2$. 

Since system (\ref{mul_rate_lh}-\ref{eq:slow-inter}) is also a two-layer networked sampled-data system with the drive-response mode, a controllability condition can be developed from Theorem \ref{con:two-driver} as follows.

\begin{corollary}\label{con:slow-inter}
The two-layer networked sampled-data system with Slow Inter-layer sampling pattern (\ref{mul_rate_lh}-\ref{eq:slow-inter}) is controllable if (1) and (2) hold simultaneously:\par
(1) $\forall\eta\in{M}((\theta_{i,j}^1)^l|(\Phi_s^{1,1})^l)$, $\eta\neq\mathbf{0}$, $\eta(\Delta^1\otimes\mathcal{B}^1(h))\neq\mathbf{0}$ for every $i=1,2,...,r^1$, $j=1,2,...,p^1(i)$.\par
(2) $\forall\eta\in{M}((\theta_{i,j}^2)^l|(\Phi_s^{2,2})^l)$, $\eta\neq\mathbf{0}$, and $\forall\xi\in\tilde{\Xi}_{i,j}$, $[\xi\tilde{\Psi}_s^{1,1},\eta(\Delta^2\otimes\mathcal{B}^2(h))]\neq\mathbf{0}$ for every $i=1,2,...,r^2$, $j=1,2,...,p^2(i)$, where $\tilde{\Xi}_{i,j}=\{\xi\in\mathbb{C}^{1\times{Nn}}|\xi((\theta_{i,j}^2)^lI_{Nn}-(\Phi_s^{1,1})^l)=\eta\tilde{\Phi}_s^{2,1}\}$.
\end{corollary}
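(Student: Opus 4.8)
The plan is to recognize that the Slow Inter-layer multi-rate system (\ref{mul_rate_lh}-\ref{eq:slow-inter}) is itself a two-layer networked sampled-data system with drive-response mode: its state matrix $\tilde{\bm{\Phi}}_s$ is block lower-triangular and its input matrix $\tilde{\bm{\Psi}}_s$ is block diagonal, exactly matching the form (\ref{two_sam}). Hence Theorem \ref{con:two-driver} applies verbatim under the substitutions $\Phi^{1,1}_s \mapsto \tilde\Phi^{1,1}_s = (\Phi^{1,1}_s)^l$, $\Phi^{2,2}_s \mapsto \tilde\Phi^{2,2}_s = (\Phi^{2,2}_s)^l$, $\Phi^{2,1}_s \mapsto \tilde\Phi^{2,1}_s$, $\Psi^{1,1}_s \mapsto \tilde\Psi^{1,1}_s$, and $\Psi^{2,2}_s \mapsto \tilde\Psi^{2,2}_s$. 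First I would record the spectra: since $\tilde\Phi^{K,K}_s = (\Phi^{K,K}_s)^l$, the spectral mapping theorem combined with Lemma \ref{two-driver-value} gives $\sigma(\tilde\Phi^{K,K}_s) = \{(\theta^K_{i,j})^l\}$, so the eigenvalues over which Theorem \ref{con:two-driver} must range are exactly $(\theta^1_{i,j})^l$ for the drive layer and $(\theta^2_{i,j})^l$ for the response layer, and the auxiliary set $\tilde\Xi_{i,j}$ is precisely the set $\Xi_{i,j}$ of the theorem evaluated at $\theta = (\theta^2_{i,j})^l$ with $\Phi^{1,1}_s$ and $\Phi^{2,1}_s$ replaced by their tilde counterparts.

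Applying the theorem directly yields the intermediate conditions (1'): $\eta\tilde\Psi^{1,1}_s \neq \mathbf{0}$ for all nonzero $\eta \in M((\theta^1_{i,j})^l | (\Phi^{1,1}_s)^l)$, and (2'): $[\xi\tilde\Psi^{1,1}_s, \eta\tilde\Psi^{2,2}_s] \neq \mathbf{0}$ for all nonzero $\eta \in M((\theta^2_{i,j})^l | (\Phi^{2,2}_s)^l)$ and $\xi \in \tilde\Xi_{i,j}$. The remaining step is to show that the stated conditions (1) and (2) imply (1') and (2'). The key observation is that, by the definition in (\ref{eq:slow-inter}), $\tilde\Psi^{K,K}_s = [(\Phi^{K,K}_s)^{l-1}\Psi^{K,K}_s, \ldots, \Phi^{K,K}_s\Psi^{K,K}_s, \Psi^{K,K}_s]$, so $\Psi^{K,K}_s = \Delta^K\otimes\mathcal{B}^K(h)$ appears as a column sub-block of $\tilde\Psi^{K,K}_s$. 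Consequently $\eta(\Delta^1\otimes\mathcal{B}^1(h)) \neq \mathbf{0}$ forces $\eta\tilde\Psi^{1,1}_s \neq \mathbf{0}$, and a nonzero $[\xi\tilde\Psi^{1,1}_s, \eta(\Delta^2\otimes\mathcal{B}^2(h))]$ forces the pair $[\xi\tilde\Psi^{1,1}_s, \eta\tilde\Psi^{2,2}_s]$ to be nonzero as well (whichever block is nonzero survives the enlargement). Thus (1)$\Rightarrow$(1') and (2)$\Rightarrow$(2'), and controllability follows from Theorem \ref{con:two-driver}.

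The only place requiring care — and what I would flag as the main obstacle — is the direction of these implications. The stated conditions are strictly stronger than those produced by the raw application of the theorem, since they demand nonvanishing against the single sub-block $\Psi^{K,K}_s$ rather than the full $\tilde\Psi^{K,K}_s$. This tightening is legitimate precisely because Theorem \ref{con:two-driver} is only sufficient, so strengthening its hypotheses preserves the conclusion, and it is desirable because the resulting criterion is expressed directly through the original single-rate data $\mathcal{B}^K(h)$, which is the computational advantage this section advertises. I would also be careful not to over-claim a converse equivalence: because $M((\theta^K_{i,j})^l | (\Phi^{K,K}_s)^l)$ may strictly contain $M(\theta^K_{i,j} | \Phi^{K,K}_s)$ whenever distinct eigenvalues of $\Phi^{K,K}_s$ differ by an $l$-th root of unity and coalesce under the $l$-th power, an eigenvector $\eta$ of the powered matrix need not be an eigenvector of $\Phi^{K,K}_s$, and then $\eta\tilde\Psi^{K,K}_s$ and $\eta\Psi^{K,K}_s$ genuinely differ. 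Hence only the sufficiency direction is available, consistent with the corollary's \emph{if} statement.
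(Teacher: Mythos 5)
Your proposal is correct and follows the same skeleton as the paper's proof: recognize that $(\tilde{\bm{\Phi}}_s,\tilde{\bm{\Psi}}_s)$ is again a two-layer drive-response sampled-data system, apply Theorem \ref{con:two-driver} to it, identify its spectrum via the spectral mapping theorem as $\{(\theta^K_{i,j})^l\}$, and then argue that the stated conditions (which involve only $\Delta^K\otimes\mathcal{B}^K(h)$) imply the raw conditions of the theorem (which involve the full $\tilde{\Psi}_s^{K,K}$). Where you diverge is in justifying that last implication. The paper proves it by computing $\eta\tilde{\Psi}_s^{K,K}=[(\theta^K_{i,j})^{l-1}\eta\Psi_s^{K,K},\ldots,\eta\Psi_s^{K,K}]$ for $\eta\in M(\theta^K_{i,j}|\Phi_s^{K,K})$, and separately for linear combinations $\eta=\sum_k a_k\eta_k$ when several eigenvalues coalesce under the $l$-th power, relying on the decomposition $M(\theta|(\Phi_s^{K,K})^l)=\oplus_k M(\theta^K_{i_k,j_k}|\Phi_s^{K,K})$. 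You instead observe that $\Psi_s^{K,K}=\Delta^K\otimes\mathcal{B}^K(h)$ is literally the last block column of $\tilde{\Psi}_s^{K,K}$, so $\eta(\Delta^K\otimes\mathcal{B}^K(h))\neq\mathbf{0}$ forces $\eta\tilde{\Psi}_s^{K,K}\neq\mathbf{0}$ for \emph{any} row vector $\eta$. Your version is more elementary and in fact slightly more robust: it covers every $\eta$ in $M((\theta^K_{i,j})^l|(\Phi_s^{K,K})^l)$ without needing that eigenspace to be spanned by eigenvectors of the un-powered block (an inclusion that can be strict when $\Phi_s^{K,K}$ is not diagonalizable), whereas the paper's computation only addresses vectors of that special form. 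Your closing caution that only the sufficiency direction survives is likewise consistent with the corollary as stated.
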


\begin{proof}
According to the spectral mapping theorem, if $\sigma(\Phi_s^{K,K})=\{\theta_{1,1}^K,...,\theta_{1,p^K(1)}^K,...,\theta_{r^K,1}^K,...,\theta_{r^K,p^K(r^K)}^K\},$ then $\sigma((\Phi_s^{K,K})^l)=\{(\theta_{1,1}^K)^l,...,(\theta_{1,p^K(1)}^K)^l,...,(\theta_{r^K,1}^K)^l,...,$ $(\theta_{r^K,p^K(r^K)}^K)^l\}$, and $M(\theta_{i,j}^K|\Phi_s^{K,K})\subset{M}((\theta_{i,j}^K)^l|(\Phi_s^{K,K})^l)$, $K=1,2$.
It is obvious that $\sigma(\tilde{\bm{\Phi}}_s)=\sigma((\Phi_s^{1,1})^l)\cup\sigma((\Phi_s^{2,2})^l)$.
From Theorem \ref{con:two-driver}, system (\ref{mul_rate_lh}-\ref{eq:slow-inter}) is controllable if the following conditions hold simultaneously:\par
(1) $\forall\eta\in{M}((\theta_{i,j}^1)^l|(\Phi_s^{1,1})^l)$, $\eta\neq\mathbf{0}$, $\eta\tilde{\Psi}^{1,1}_s\neq\mathbf{0}$ for every $i=1,2,...,r^1$, $j=1,2,...,p^1(i)$.\par
(2) $\forall\eta\in{M}((\theta_{i,j}^2)^l|(\Phi_s^{2,2})^l)$, $\eta\neq\mathbf{0}$, and $\forall\xi\in\tilde{\Xi}_{i,j},[\xi\tilde{\Psi}^{1,1}_s,\eta\tilde{\Psi}^{2,2}_s]\neq\mathbf{0}$ for every $i=1,2,...,r^2, j=1,2,...,p^2(i)$, where $\tilde{\Xi}_{i,j}=\{\xi\in\mathbb{C}^{1\times{Nn}}|\xi((\theta_{i,j}^2)^lI_{Nn}-(\Phi_s^{1,1})^l)=\eta\tilde{\Phi}_s^{2,1}\}$.\par
If $\eta\in{M}(\theta_{i,j}^K|\Phi_s^{K,K}),$ it follows that
\begin{equation}\label{equ:mul_1}
    \begin{aligned}
        \eta\tilde{\Phi}_s^{K,K}&=\eta[(\Phi^{K,K}_s)^{l-1}\Psi_s^{K,K},...,\Phi^{K,K}_s\Psi_s^{K,K},\Psi_s^{K,K}]\\&=[(\theta_{i,j}^K)^{l-1}\eta\Psi^{K,K}_s,...,\theta_{i,j}^K\eta\Psi^{K,K}_s,\eta\Psi^{K,K}_s].
    \end{aligned}
\end{equation}
Otherwise, if $(\theta_{i_1,j_1}^K)^l=(\theta_{i_2,j_2}^K)^l=...=(\theta_{i_q,j_q}^K)^l=\theta$, where $i_k\in\{1,...,r^K\}$ and $j_k\in\{1,...,p_{i_k}^K\}$, $k=1,...,q,q>1$, then $M(\theta|(\Phi_s^{K,K})^l)=\oplus_{k=1}^qM(\theta_{i_k,j_k}^K)|\Phi_s^{K,K}$.
Consider $\eta=\sum_{k=1}^q a_k\eta_k$, with $[a_1,...,a_q]\neq\mathbf{0}$, $a_k\in\mathbb{C}$, $\eta_k\in{M(\theta_{i_k,j_k}^K)|\Phi_s^{K,K}})$, then
\begin{equation}
    \begin{aligned}
        &\eta\tilde{\Psi}_s^{K,K}=\eta[(\Phi^{K,K}_s)^{l-1}\Psi_s^{K,K},...,\Psi_s^{K,K}]\\
        &=[\sum_{k=1}^q a_k(\theta_{i_k,j_k}^K)^{l-1}\eta_k\Psi^{K,K}_s,...,\sum_{k=1}^q a_k\eta_k\Psi^{K,K}_s].
    \end{aligned}
\end{equation}
Overall, it is easy to see that if $\eta\Phi^{K,K}_s=\eta(\Delta^K\otimes{\mathcal{B}^K(h)})\neq\mathbf{0}$, then $\eta\tilde{\Phi}_s^{K,K}\neq\mathbf{0}$, $K=1,2$.
So far Corollary \ref{con:slow-inter} is proved.
\end{proof}

\subsection{Multi-scale Sampling Pattern}

Assume that the timescale of different layers are different from each other, i.e., the sampling rates of the ZOHs in different layers are different\cite{posfai2016controllability}, as is shown in Fig.\ref{fig:multi-scale}.
Perform the Multi-scale sampling pattern: Let the sampling period of the drive layer be $lh$, and the sampling period of the response layer be $h$.
The dynamics of the multi-scale sampled-data system can be described as:
$$\begin{aligned}
\dot{X}^1&(t)=(I_{N}\otimes{A^1})X^1(t)+(W^1\otimes{H^1C^1})X^1(klh)\\&+(\Delta^1\otimes{B^1})U^1(klh)\\
\dot{X}^2&(t)=(I_{N}\otimes{A^2})X^2(t)+(W^2\otimes{H^2C^2})X^2((kl+r)h)\\&+(\Delta^2\otimes{B^2})U^2((kl+r)h)+(D^{2,1}\otimes{P^{2,1}C^1})X^1(klh),\\
\end{aligned}$$
where $t\in((kl+r)h,(kl+r+1)h]$, $r=0,1,...,l-1$ and $k\in\mathbb{N}$.
Thus, in the multi-rate sampled-data system (\ref{mul_rate_lh},\ref{mul_rate_lh_detail}):
\begin{equation}\label{eq:multiscale}
\begin{aligned}
    &\tilde{\Phi}_s^{1,1}=I_N\otimes{e^{A^1lh}+W^1\otimes{\mathcal{H}^1(lh)}},\\
    &\tilde{\Phi}_s^{2,1}=((\Phi_s^{2,2})^{l-1}+...+\Phi_s^{2,2}+I_{Nn})\Phi_s^{2,1},\\
    &\tilde{\Phi}_s^{2,2}=(\Phi^{2,2}_s)^l, \ \tilde{\Psi}_s^{1,1}=\Delta^1\otimes{\mathcal{B}^1(lh)},\\
    &\tilde{\Psi}_s^{2,2}=[(\Phi^{2,2}_s)^{l-1}\Psi_s^{2,2},...,\Phi_s^{2,2}\Psi_s^{2,2},\Psi_s^{2,2}].
\end{aligned}
\end{equation}
Note that $\Phi_s^{2,2},\Psi_s^{2,2}$ and $\Phi_s^{2,1}$ are the same as that in equation (\ref{two-layer-driver_s}).
$\tilde{\mathbf{U}}(klh)=[(U^1(klh))^\top,(\tilde{\mathbf{U}}^2(klh))^\top]^\top,$ with $\tilde{\mathbf{U}}^2(klh)=[(U^2(klh))^\top,(U^2((kl+1)h))^\top,...,(U^2((kl+l-1)h))^\top]^\top$.

Let the eigenvalues and corresponding eigenspace of $\tilde{\Phi}_s^{1,1}$ be $\tilde{\theta}^1_{i,j}$ and $M(\tilde{\theta}^1_{i,j}|\tilde{\Phi}_s^{1,1})$, $i=1,...,r^1,j=1,...,\tilde{p}^1(i)$, respectively.
Similar to the proof of Corollary \ref{con:slow-inter},
the following controllability condition can be developed from Theorem \ref{con:two-driver} for system (\ref{mul_rate_lh}-\ref{mul_rate_lh_detail},\ref{eq:multiscale}).

\begin{corollary}\label{con:multiscale}
The two-layer networked sampled-data system with Multi-scale sampling pattern (\ref{mul_rate_lh}-\ref{mul_rate_lh_detail},\ref{eq:multiscale}) is controllable if (1) and (2) hold simultaneously:\par
(1) $\forall\eta\in{M}(\tilde{\theta}_{i,j}^1|\tilde{\Phi}_s^{1,1})$, $\eta\neq\mathbf{0}$, $\eta(\Delta^1\otimes\mathcal{B}^1(lh))\neq\mathbf{0}$ for every $i=1,2,...,r^1$, $j=1,2,...,\tilde{p}^1(i)$.\par
(2) $\forall\eta\in{M}((\theta_{i,j}^2)^l|(\Phi_s^{2,2})^l)$, $\eta\neq\mathbf{0}$, and $\forall\xi\in\tilde{\Xi}_{i,j}$, $[\xi(\Delta^1\otimes\mathcal{B}^1(lh)),\eta(\Delta^2\otimes\mathcal{B}^2(h))]\neq\mathbf{0}$ for every $i=1,2,...,r^2$, $j=1,2,...,p^2(i)$, where $\tilde{\Xi}_{i,j}=\{\xi\in\mathbb{C}^{1\times{Nn}}|\xi((\theta_{i,j}^2)^lI_{Nn}-\tilde{\Phi}_s^{1,1})=\eta\tilde{\Phi}_s^{2,1}\}$.
\end{corollary}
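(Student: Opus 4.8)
The plan is to recognize that system (\ref{mul_rate_lh}-\ref{mul_rate_lh_detail},\ref{eq:multiscale}) is again a two-layer networked sampled-data system of the drive-response type, so that Theorem \ref{con:two-driver} applies once the single-rate blocks $\Phi_s^{K,K},\Phi_s^{2,1},\Psi_s^{K,K}$ are replaced by their multi-scale counterparts $\tilde{\Phi}_s^{1,1},\tilde{\Phi}_s^{2,2},\tilde{\Phi}_s^{2,1},\tilde{\Psi}_s^{1,1},\tilde{\Psi}_s^{2,2}$ from (\ref{eq:multiscale}). This reduction yields a controllability condition phrased in the eigenstructure of the new blocks; the remaining work is to rewrite that condition in the easier-to-verify form claimed in the corollary, treating the two layers separately exactly as in the proof of Corollary \ref{con:slow-inter}.

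For the drive layer, the key observation is that $\tilde{\Phi}_s^{1,1}=I_N\otimes e^{A^1lh}+W^1\otimes\mathcal{H}^1(lh)$ is precisely the single-rate drive-layer matrix of period $lh$ rather than $h$, being linear in $W^1$ and hence genuinely different from $(\Phi_s^{1,1})^l$. Its eigenvalues $\tilde{\theta}^1_{i,j}$ and eigenspaces $M(\tilde{\theta}^1_{i,j}|\tilde{\Phi}_s^{1,1})$ are therefore obtained from Lemma \ref{two-driver-value} and Lemma \ref{two-driver-vector} with $h$ replaced by $lh$. Since the corresponding input block is simply $\tilde{\Psi}_s^{1,1}=\Delta^1\otimes\mathcal{B}^1(lh)$, with no concatenation of powers, condition (1) of Theorem \ref{con:two-driver} becomes immediately condition (1) of the corollary, with no further manipulation.

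For the response layer the situation mirrors Corollary \ref{con:slow-inter}: here $\tilde{\Phi}_s^{2,2}=(\Phi_s^{2,2})^l$, so by the spectral mapping theorem its spectrum is $\{(\theta^2_{i,j})^l\}$ and $M(\theta^2_{i,j}|\Phi_s^{2,2})\subset M((\theta^2_{i,j})^l|(\Phi_s^{2,2})^l)$, which justifies stating condition (2) over the enlarged eigenspaces $M((\theta^2_{i,j})^l|(\Phi_s^{2,2})^l)$. The central computation is to show that for $\eta$ in this eigenspace one has $\eta\tilde{\Psi}_s^{2,2}\neq\mathbf{0}$ if and only if $\eta(\Delta^2\otimes\mathcal{B}^2(h))\neq\mathbf{0}$; this is exactly the identity (\ref{equ:mul_1}) already established in the proof of Corollary \ref{con:slow-inter}, since $\tilde{\Psi}_s^{2,2}$ retains the same concatenated form $[(\Phi_s^{2,2})^{l-1}\Psi_s^{2,2},\dots,\Psi_s^{2,2}]$. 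Combining this with $\xi\tilde{\Psi}_s^{1,1}=\xi(\Delta^1\otimes\mathcal{B}^1(lh))$ and the definition of $\tilde{\Xi}_{i,j}$ in terms of $\tilde{\Phi}_s^{1,1}$ and $\tilde{\Phi}_s^{2,1}$ converts condition (2) of Theorem \ref{con:two-driver} into condition (2) of the corollary.

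The main obstacle I anticipate is the eigenvalue-collision case: when several distinct response-layer eigenvalues $\theta^2_{i_1,j_1},\dots,\theta^2_{i_q,j_q}$ share a common $l$-th power, the eigenspace $M((\theta^2_{i,j})^l|(\Phi_s^{2,2})^l)$ is the direct sum of the individual single-rate eigenspaces, and a general $\eta=\sum_k a_k\eta_k$ in it no longer satisfies $\eta\Phi_s^{2,2}=(\theta^2_{i,j})^l\eta$ with a single scalar. I would handle this as in Corollary \ref{con:slow-inter}, expanding $\eta\tilde{\Psi}_s^{2,2}$ blockwise as $[\sum_k a_k(\theta^2_{i_k,j_k})^{l-1}\eta_k\Psi_s^{2,2},\dots,\sum_k a_k\eta_k\Psi_s^{2,2}]$ and arguing that its vanishing forces $\eta\Psi_s^{2,2}=\mathbf{0}$. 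Care is needed because the drive-layer block $\tilde{\Phi}_s^{1,1}$ here is \emph{not} the $l$-th power of the period-$h$ map but a genuine period-$lh$ sampling, so the set $\tilde{\Xi}_{i,j}$ and the first entry $\xi\tilde{\Psi}_s^{1,1}$ must be tracked with the multi-scale $\tilde{\Phi}_s^{1,1}$ throughout rather than borrowed verbatim from the Slow Inter-layer proof.
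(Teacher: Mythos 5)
Your proposal is correct and follows the same route the paper intends: the paper omits the proof, stating only that it is ``similar to the proof of Corollary \ref{con:slow-inter}'' and developed from Theorem \ref{con:two-driver}, which is exactly what you carry out, including the key distinction that $\tilde{\Phi}_s^{1,1}$ is a genuine period-$lh$ single-rate block rather than $(\Phi_s^{1,1})^l$. One minor over-claim: the ``if and only if'' you assert between $\eta\tilde{\Psi}_s^{2,2}\neq\mathbf{0}$ and $\eta(\Delta^2\otimes\mathcal{B}^2(h))\neq\mathbf{0}$ can fail in the eigenvalue-collision case, but only the direction you actually use for sufficiency is needed, and it holds trivially because $\eta(\Delta^2\otimes\mathcal{B}^2(h))$ is the last block of $\eta\tilde{\Psi}_s^{2,2}$.
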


\subsection{Fast Control Sampling Pattern}
\label{subsec:fast_contr}

Assume that the sampling on the control channels are faster than that on other channels, as is shown in Fig.\ref{fig:fast-con}, where the sampling period on the the control channels is $h$, while the sampling period on other channels is $lh$.
Perform the Fast Control sampling pattern and the dynamics of the multi-rate sampled-data system can be described as:
$$\begin{aligned}
\dot{X}^1&(t)=(I_{N}\otimes{A^1})X^1(t)+(W^1\otimes{H^1C^1})X^1(klh)\\&+(\Delta^1\otimes{B^1})U^1((kl+r)h)\\
\dot{X}^2&(t)=(I_{N}\otimes{A^2})X^2(t)+(W^2\otimes{H^2C^2})X^2(klh)\\&+(\Delta^2\otimes{B^2})U^2((kl+r)h)+(D^{2,1}\otimes{P^{2,1}C^1})X^1(klh),\\
\end{aligned}$$
where $t\in((kl+r)h,(kl+r+1)h]$, $r=0,1,...,l-1$ and $k\in\mathbb{N}$.
Thus in the multi-rate sampled-data system (\ref{mul_rate_lh},\ref{mul_rate_lh_detail}):
\begin{equation}\label{eq:fast-con}
\begin{aligned}
    \tilde{\Phi}_s^{1,1}=&I_N\otimes{e^{A^1lh}+W^1\otimes{\mathcal{H}^1(lh)}},\\
    \tilde{\Phi}_s^{2,2}=&I_N\otimes{e^{A^2lh}+W^2\otimes{\mathcal{H}^2(lh)}},\\
    \tilde{\Phi}_s^{2,1}=&D^{2,1}\otimes{\mathcal{P}^{2,1}(lh)},\\
    \tilde{\Psi}_s^{1,1}=&[\Delta^1\otimes{(\mathcal{B}^1(lh)-\mathcal{B}^1((l-1)h))},...,\\
    &\Delta^1\otimes{(\mathcal{B}^1(2h)-\mathcal{B}^1(h))},\Delta^1\otimes\mathcal{B}^1(h)],\\
    \tilde{\Psi}_s^{2,2}=&[\Delta^2\otimes{(\mathcal{B}^2(lh)-\mathcal{B}^2((l-1)h))},...,\\
    &\Delta^2\otimes{(\mathcal{B}^2(2h)-\mathcal{B}^2(h))},\Delta^2\otimes\mathcal{B}^2(h)],
\end{aligned}
\end{equation}
$\tilde{\mathbf{U}}(klh)=[(\tilde{\mathbf{U}}^1(klh))^\top,(\tilde{\mathbf{U}}^2(klh))^\top]^\top,$ with $\tilde{\mathbf{U}}^i(klh)=[(U^i(klh))^\top,(U^i((kl+1)h))^\top,...,(U^i((kl+l-1)h))^\top]^\top$, $i=1,2$.

Let the eigenvalues and corresponding eigenspace of $\tilde{\Phi}_s^{K,K}$ be $\tilde{\theta}^K_{i,j}$ and $M(\tilde{\theta}^K_{i,j}|\tilde{\Phi}_s^{K,K})$, $i=1,...,r^K,j=1,...,\tilde{p}^K(i)$, respectively, $K=1,2$.
Similar to the proof of Theorem \ref{con:two-driver}, the following controllability condition can be developed from Theorem \ref{con:two-driver} for system (\ref{mul_rate_lh}-\ref{mul_rate_lh_detail},\ref{eq:fast-con}).

\begin{corollary}\label{con:fast-con}
The two-layer networked sampled-data system with Fast Control sampling pattern (\ref{mul_rate_lh}-\ref{mul_rate_lh_detail},\ref{eq:fast-con}) is controllable if (1) and (2) hold simultaneously:\par
(1) $\forall\eta\in{M}(\tilde{\theta}_{i,j}^1|\tilde{\Phi}_s^{1,1})$, $\eta\neq\mathbf{0}$, $\eta[\Delta^1\otimes\mathcal{B}^1(lh),...,\Delta^1\otimes\mathcal{B}^1(h)]\neq\mathbf{0}$ for every $i=1,2,...,r^1$, $j=1,2,...,\tilde{p}^1(i)$.\par
(2) $\forall\eta\in{M}(\tilde{\theta}_{i,j}^2|\tilde{\Phi}_s^{2,2})$, $\eta\neq\mathbf{0}$, and $\forall\xi\in\tilde{\Xi}_{i,j}$, $[\xi[\Delta^1\otimes\mathcal{B}^1(lh),...,\Delta^1\otimes\mathcal{B}^1(h)],\eta[\Delta^2\otimes\mathcal{B}^2(lh),...,\Delta^2\otimes\mathcal{B}^2(h)]]\neq\mathbf{0}$ for every $i=1,2,...,r^2$, $j=1,2,...,\tilde{p}^2(i)$, where $\tilde{\Xi}_{i,j}=\{\xi\in\mathbb{C}^{1\times{Nn}}|\xi(\tilde{\theta}_{i,j}^2I_{Nn}-\tilde{\Phi}_s^{1,1})=\eta\tilde{\Phi}_s^{2,1}\}$.
\end{corollary}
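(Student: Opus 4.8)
The plan is to recognize the Fast Control system (\ref{mul_rate_lh}-\ref{mul_rate_lh_detail},\ref{eq:fast-con}) as itself a two-layer networked sampled-data system of drive-response type and then invoke Theorem \ref{con:two-driver} directly. By (\ref{mul_rate_lh_detail}) the aggregated state matrix $\tilde{\bm{\Phi}}_s$ is block lower-triangular and the input matrix $\tilde{\bm{\Psi}}_s$ is block diagonal, which is exactly the structural form (\ref{two_sam}). Consequently $\sigma(\tilde{\bm{\Phi}}_s)=\sigma(\tilde{\Phi}_s^{1,1})\cup\sigma(\tilde{\Phi}_s^{2,2})$, and the eigenvalues $\tilde{\theta}^K_{i,j}$ together with the eigenspaces $M(\tilde{\theta}^K_{i,j}|\tilde{\Phi}_s^{K,K})$ assume the roles played by $\theta^K_{i,j}$ and $M(\theta^K_{i,j}|\Phi_s^{K,K})$ in Theorem \ref{con:two-driver}. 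First I would apply Theorem \ref{con:two-driver} verbatim under the substitutions $\Phi_s^{K,L}\mapsto\tilde{\Phi}_s^{K,L}$ and $\Psi_s^{K,K}\mapsto\tilde{\Psi}_s^{K,K}$, obtaining the preliminary conditions that for every nonzero $\eta\in M(\tilde{\theta}^1_{i,j}|\tilde{\Phi}_s^{1,1})$ one has $\eta\tilde{\Psi}_s^{1,1}\neq\mathbf{0}$, and that for every nonzero $\eta\in M(\tilde{\theta}^2_{i,j}|\tilde{\Phi}_s^{2,2})$ and every $\xi\in\tilde{\Xi}_{i,j}$ one has $[\xi\tilde{\Psi}_s^{1,1},\eta\tilde{\Psi}_s^{2,2}]\neq\mathbf{0}$, where $\tilde{\Xi}_{i,j}$ is defined exactly as in the statement; note that $\tilde{\Xi}_{i,j}$ involves only the $\tilde{\Phi}_s$ blocks and is thus untouched by the next step.

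The remaining gap is that these preliminary conditions are phrased through $\tilde{\Psi}_s^{K,K}$, whose columns are the \emph{differences} $\Delta^K\otimes(\mathcal{B}^K(kh)-\mathcal{B}^K((k-1)h))$ appearing in (\ref{eq:fast-con}), whereas the corollary is phrased through the direct blocks $\Delta^K\otimes\mathcal{B}^K(kh)$. The key step is therefore to establish, for an arbitrary row vector $\eta$, the equivalence
\[
\eta\,\tilde{\Psi}_s^{K,K}=\mathbf{0}\iff\eta\,[\Delta^K\otimes\mathcal{B}^K(lh),\ldots,\Delta^K\otimes\mathcal{B}^K(h)]=\mathbf{0} .
\]
Writing $M^K_k=\Delta^K\otimes\mathcal{B}^K(kh)$, the block row $\tilde{\Psi}_s^{K,K}=[M^K_l-M^K_{l-1},\ldots,M^K_2-M^K_1,M^K_1]$ is obtained from $[M^K_l,\ldots,M^K_1]$ by right multiplication with a unit (block) triangular, hence invertible, matrix, so the two share the same left null vectors; I would prove the equivalence directly by telescoping, noting that $\eta\tilde{\Psi}_s^{K,K}=\mathbf{0}$ forces $\eta M^K_1=\mathbf{0}$ and $\eta(M^K_k-M^K_{k-1})=\mathbf{0}$, whence $\eta M^K_k=\mathbf{0}$ for all $k$ by induction, with the converse immediate.

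Substituting this equivalence (applied to $\eta$ for layer $1$, to $\xi$ for layer $1$, and to $\eta$ for layer $2$) into the two preliminary conditions turns them into precisely conditions (1) and (2) of the corollary; the concatenated vector in condition (2) is nonzero if and only if at least one of its two parts is nonzero, so the block-wise equivalences combine cleanly. I expect the main obstacle to be not any single computation but the bookkeeping of this bridging step: one must verify that $\tilde{\Xi}_{i,j}$ is left intact (it is built only from $\tilde{\Phi}_s^{1,1}$ and $\tilde{\Phi}_s^{2,1}$) and that the difference-to-direct conversion is a genuine two-sided equivalence rather than a one-sided implication, so that the sufficiency inherited from Theorem \ref{con:two-driver} is transported without strengthening or weakening the hypotheses.
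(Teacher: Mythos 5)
Your proposal is correct and follows exactly the route the paper intends: the paper omits the proof, stating only that the corollary is ``developed from Theorem \ref{con:two-driver}'' by treating $(\tilde{\bm{\Phi}}_s,\tilde{\bm{\Psi}}_s)$ as another two-layer drive-response system. Your explicit bridging step --- showing via the unit block-triangular change of columns (equivalently, telescoping) that $\eta\tilde{\Psi}_s^{K,K}=\mathbf{0}$ iff $\eta[\Delta^K\otimes\mathcal{B}^K(lh),\ldots,\Delta^K\otimes\mathcal{B}^K(h)]=\mathbf{0}$ --- is precisely the detail the paper glosses over, and you have verified it as a genuine two-sided equivalence, so nothing is missing.
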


\section{Simulated Examples}\label{sec:exa}

\subsection{Control of Response-layer by Inter-layer Coupling}\label{sec:exa_1}
\begin{example}\label{exa_2}
Consider the following two-layer networked sampled-data system with drive-response mode, where both layers are chain networks consisting of two nodes, $w^1_{21}=w^2_{21}=1,\delta^1_1=\delta^2_2=1$.
Let sampling period $h=0.1$, and
$$\begin{aligned}
&P^{2,1}=A^1=\left[\begin{array}{cc}
    1 & 1 \\
    0 & 1
\end{array}\right],
B^1=\left[\begin{array}{cc}
    2 & 0 \\
    0 & 1
\end{array}\right],
C^1=\left[\begin{array}{cc}
    2 & 1 \\
    0 & 1
\end{array}\right],\\
&A^2=\left[\begin{array}{cc}
    1 & 0 \\
    1 & 1
\end{array}\right],
B^2=\left[\begin{array}{cc}
    2 & 1 \\
    0 & 2
\end{array}\right],
C^2=\left[\begin{array}{cc}
    1 & 0 \\
    0 & 2
\end{array}\right],\\
&H^1=\left[\begin{array}{cc}
    1 & 1 \\
    1 & 0
\end{array}\right],
H^2=\left[\begin{array}{cc}
    0 & 1 \\
    1 & 0
\end{array}\right],
D^{2,1}=\left[\begin{array}{cc}
    1 & 0 \\
    0 & 0
\end{array}\right].
\end{aligned}$$
One can calculate that $\sigma(W^1)=\sigma(W^2)=\{0\}$.
The corresponding generalized eigenvectors are $v^1_1(1)=v^2_1(1)=[1,0],v^1_1(2)=v^2_1(2)=[0,1]$.
And
$$\begin{aligned}
e^{A^1h}&=\left[\begin{array}{cc}
    1.1052 & 0.1105 \\
    0 & 1.1052
\end{array}\right],
e^{A^2h}=\left[\begin{array}{cc}
    1.1052 & 0 \\
    0.1105 & 1.1052
\end{array}\right],\\
\mathcal{B}^1(h)&=\left[\begin{array}{cc}
    0.2103 & 0.0053\\
    0 & 0.1052
\end{array}\right],
\mathcal{B}^2(h)=\left[\begin{array}{cc}
    0.2103 & 0.1052\\
    0.0107 & 0.2157
\end{array}\right],\\
\mathcal{H}^1(h)&=\left[\begin{array}{cc}
    0.2210 & 0.2157\\
    0.2103 & 0.1052
\end{array}\right],
\mathcal{H}^2(h)=\left[\begin{array}{cc}
    0 & 0.2103\\
    0.1052 & 0.0107
\end{array}\right].
\end{aligned}$$

Then $E^1_1=e^{A^1h},E^2_1=e^{A^2h}$.
$\sigma(\Phi^{1,1}_s)=\sigma(\Phi^{2,2}_s)=\{1.1052\},\xi^1_{1,1}(1)=[0,1],\xi^2_{1,1}(1)=[1,0]$.
Therefore, $\eta^1_{1,1}(1)=v^1_1(1)\otimes\xi^1_{1,1}(1)=[0,1,0,0],\eta^2_{1,1}(1)=v^2_1(1)\otimes\xi^2_{1,1}(1)=[1,0,0,0].$
Since $\eta^2_{1,1}(\Delta^2\otimes\mathcal{B}^2(h))=\mathbf{0}$, the response-layer is uncontrollable itself.\par

However, consider the inter-layer coupling, and one has
$$\mathcal{P}^{2,1}(h)=\left[\begin{array}{cc}
    0.2103 & 0.2103 \\
    0.0107 & 0.1159
\end{array}\right].$$
Solve the equation $\xi(\theta_{1,1}^2I_4-\Phi^{1,1}_s)=\eta^2_{1,1}(1)\Phi^{2,1}_s$, i.e.,
$$\begin{aligned}
&\xi\left[\begin{array}{cccc}
    0 & -0.1105 & 0 & 0 \\
    0 & 0 & 0 & 0 \\
    -0.2210 & -0.2157 & 0 & -0.1105\\
    -0.2103 & -0.1052 & 0 & 0
\end{array}\right]\\=&[1,0,0,0]\left[\begin{array}{cccc}
    0.2103 & 0.2103 & 0 & 0 \\
    0.0107 & 0.1159 & 0 & 0 \\
    0 & 0 & 0 & 0\\
    0 & 0 & 0 & 0
\end{array}\right],
\end{aligned}$$
which leads to $\xi=[-0.9511,\xi_2,0,-1]$, where $\xi_2$ is an arbitrary complex number.
It can be seen that $\eta^1_{1,1}(1)(\Delta^1\otimes\mathcal{B}^1(h))\neq{0}$ and $[\xi(\Delta^1\otimes\mathcal{B}^1(h)),\eta^2_{1,1}(1)(\Delta^2\otimes\mathcal{B}^2(h))]\neq\mathbf{0}$.
According to Theorem \ref{con:two-driver}, the whole networked sampled-data system is controllable.
\end{example}

\subsection{Elimination of Pathological Sampling}\label{sec:exa_2}
\begin{example}\label{exa_1}
Consider a two-layer homogeneous networked sampled-data system with drive-response mode, where both layers consist of two nodes.
Let $\Delta^1=I_2,\Delta^2=O$, and
$$\begin{aligned}
&W^1=\left[\begin{array}{cc}
    0 & 1 \\
    1 & 0
\end{array}\right],
W^2=\left[\begin{array}{cc}
    0 & 1 \\
    4 & 0
\end{array}\right],
D^{2,1}=\left[\begin{array}{cc}
    1 & 0 \\
    0 & 0
\end{array}\right],\\
&A=\left[\begin{array}{cc}
    1 & 1 \\
    -1 & 1
\end{array}\right],
B=\left[\begin{array}{c}
    1\\
    0
\end{array}\right],
C=I_2,
H=I_2.
\end{aligned}$$
Let the sampling period $h=\pi$, one can calculate that
$$\begin{aligned}
&e^{Ah}=\left[\begin{array}{cc}
    -23.1407 & 0 \\
    0 & -23.1407
\end{array}\right],
\mathcal{B}(h)=\left[\begin{array}{c}
    -12.0703\\
    -12.0703
\end{array}\right],\\
&\mathcal{H}(h)=\left[\begin{array}{cc}
    -12.0703 & 12.0703 \\
    -12.0703 & -12.0703
\end{array}\right].
\end{aligned}$$
According to Lemma \ref{lem_homo} and Lemma \ref{eigen_homo}, $\lambda^1_1=1,\lambda^1_2=-1,\lambda^2_1=2,\lambda^2_2=-2$.
$v^1_1(1)=[1,1],v^1_2(1)=[-1,1],v^2_1(1)=[2,1],v^2_2(1)=[-2,1],v^{2,1}_1(1)=[\frac{4}{3},\frac{2}{3}],v^{2,1}_2(1)=[\frac{4}{3},-\frac{2}{3}]$.
And
$$\begin{aligned}
&E^1_1=e^{Ah}+\mathcal{H}(h)=\left[\begin{array}{cc}
    -35.2110 & 12.0703 \\
    -12.0703 & -35.2110
\end{array}\right],\\
&E^1_2=e^{Ah}-\mathcal{H}(h)=\left[\begin{array}{cc}
    -11.0703 & -12.0703 \\
    12.0703 & -11.0703
\end{array}\right],\\
&E^2_1=e^{Ah}+2\mathcal{H}(h)=\left[\begin{array}{cc}
    -47.2814 & 24.1407 \\
    -24.1407 & -47.2814
\end{array}\right],\\
&E^2_2=e^{Ah}-2\mathcal{H}(h)=\left[\begin{array}{cc}
    1.0000 & -24.1407 \\
    24.1407 & 1.0000
\end{array}\right].
\end{aligned}$$
The eigenvalues and corresponding eigenvectors of $E_i^j$, $i=1,2$, $j=1,2$ are
$$\begin{aligned}
&\theta_{1,1}^1=-35.2110+12.0703i, \ \xi_{1,1}^1(1)=[i,1],\\
&\theta_{1,2}^1=-35.2110-12.0703i, \ \xi_{1,2}^1(1)=[-i,1],\\
&\theta_{2,1}^1=-11.0703+12.0703i, \ \xi_{2,1}^1(1)=[-i,1],\\
&\theta_{2,2}^1=-11.0703-12.0703i, \ \xi_{2,2}^1(1)=[i,1],\\
&\theta_{1,1}^2=-47.2814-24.1407i, \ \xi_{1,1}^2(1)=[-i,1],\\
&\theta_{1,2}^2=-47.2814+24.1407i, \ \xi_{1,2}^2(1)=[i,1],\\
&\theta_{2,1}^2=1.0000+24.1407i, \ \xi_{2,1}^2(1)=[-i,1],\\
&\theta_{2,2}^2=1.0000-24.1407i, \ \xi_{2,2}^1(1)=[i,1].
\end{aligned}$$
Then the eigenvectors of the state matrix $\bm{\Phi}_s$ of the whole system can be calculated as follows:
$$\begin{aligned}
\eta^1_{1,1}(1)&=[v^1_1(1),\mathbf{0}]\otimes\xi^1_{1,1}(1)=[i,1,i,1,0,0,0,0],\\
\eta^1_{1,2}(1)&=[v^1_1(1),\mathbf{0}]\otimes\xi^1_{1,2}(1)=[-i,1,-i,1,0,0,0,0],\\
\eta^1_{2,1}(1)&=[v^1_2(1),\mathbf{0}]\otimes\xi^1_{2,1}(1)=[i,-1,-i,1,0,0,0,0],\\
\eta^1_{2,2}(1)&=[v^1_2(1),\mathbf{0}]\otimes\xi^1_{2,2}(1)=[-i,-1,i,1,0,0,0,0],\end{aligned}$$
$$\begin{aligned}
\eta^2_{1,1}(1)&=[v^{2,1}_1(1),v^2_1(1)]\otimes\xi^2_{1,1}(1)\\&=[-\frac{4}{3}i,\frac{4}{3},-\frac{2}{3}i,\frac{2}{3},-2i,2,-i,1],\\
\eta^2_{1,2}(1)&=[v^{2,1}_1(1),v^2_1(1)]\otimes\xi^2_{1,2}(1)\\&=[\frac{4}{3}i,\frac{4}{3},\frac{2}{3}i,\frac{2}{3},2i,2,i,1],\\
\eta^2_{2,1}(1)&=[v^{2,1}_2(1),v^2_2(1)]\otimes\xi^2_{2,1}(1)\\&=[-\frac{4}{3}i,\frac{4}{3},\frac{2}{3}i,-\frac{2}{3},2i,-2,-i,1],\\
\eta^2_{2,2}(1)&=[v^{2,1}_2(1),v^2_2(1)]\otimes\xi^2_{2,2}(1)\\&=[\frac{4}{3}i,\frac{4}{3},-\frac{2}{3}i,-\frac{2}{3},-2i,-2,i,1].
\end{aligned}$$
It is obvious that for every $\eta\in\{\eta^1_{1,1}(1),...,\eta^2_{2,2}(1)\},\eta(\bar{\Delta}\otimes\mathcal{B}(h))\neq\mathbf{0}.$
According to Theorem \ref{con:two-driver_homo_s}, it is shown that the whole networked sampled-data system is controllable.
\end{example}

\subsection{Effect of Multi-rate Sampling on Controllability}\label{sec:exa_34}
\begin{example}\label{exa_3}
Consider the following two-layer networked sampled-data system with drive-response mode. 
The topology of each layer is a chain consisting of two nodes.
Let $w^1_{12}=w^2_{21}=1,\Delta^1=I_2,\Delta^2=O$, sampling period $h=\frac{\pi}{2}$, and
$$\begin{aligned}
&A^1=A^2=\left[\begin{array}{cc}
    1 & 1 \\
    -1 & 1
\end{array}\right],
B^1=B^2=\left[\begin{array}{c}
    1\\
    0
\end{array}\right],
C^1=C^2=I_2,\\
&D^{2,1}=\left[\begin{array}{cc}
    1 & 0 \\
    0 & 0
\end{array}\right],
P^{2,1}=\left[\begin{array}{cc}
    1 & 0 \\
    0 & 0
\end{array}\right],
H^1=H^2=I_2.
\end{aligned}$$
One can calculate that
$$\begin{aligned}
&\mathcal{P}^{2,1}(h)=\left[\begin{array}{cc}
    1.9502 & 0 \\
    -2.9502 & 0
\end{array}\right],
e^{A^Kh}=\left[\begin{array}{cc}
    0 & 4.8105 \\
    -4.8105 & 0
\end{array}\right],\\
&\mathcal{B}^K(h)=\left[\begin{array}{c}
    1.9052\\
    -2.9502
\end{array}\right],
\mathcal{H}^K(h)=\left[\begin{array}{cc}
    1.9502 & 2.9502 \\
    -2.9502 & 1.9502
\end{array}\right],
\end{aligned}$$
$K=1,2.$
Verifying the controllability of system $(\bm{\Phi}_s,\bm{\Psi}_s)$ by the rank of its controllability matrix, one can draw the conclusion that $(\bm{\Phi}_s,\bm{\Psi}_s)$ is controllable for $rank([\bm{\Psi}_s,\bm{\Phi}_s\bm{\Psi}_s,...,\bm{\Phi}_s^7\bm{\Psi}_s])=8$.
However, if there is a limitation on the inter-layer transmission channels so that the inter-layer sampling rate becomes $\pi$, the controllability of the whole system will be damaged, because $(\tilde{\bm{\Phi}}_s,\tilde{\bm{\Psi}}_s)$ is uncontrollable for $rank([\tilde{\bm{\Psi}}_s,\tilde{\bm{\Phi}}_s\tilde{\bm{\Psi}}_s,$ ..., $\tilde{\bm{\Phi}}_s^7\tilde{\bm{\Psi}}_s])=6$.
\end{example}

\begin{example}\label{exa_4}
Consider the following two-layer networked sampled-data system with drive-response mode. 
The topology of each layer is a chain consisted by two nodes.
Let $w^1_{12}=w^2_{21}=1,\Delta^1=I_2,\Delta^2=O$, sampling period $h=\pi$, and
$$\begin{aligned}
&A^1=A^2=\left[\begin{array}{cc}
    1 & 1 \\
    -1 & 1
\end{array}\right],
B^1=B^2=\left[\begin{array}{c}
    1\\
    0
\end{array}\right],
C^1=C^2=I_2,\\
&D^{2,1}=\left[\begin{array}{cc}
    1 & 0 \\
    0 & 0
\end{array}\right],
H^1=H^2=P^{2,1}=I_2.
\end{aligned}$$
One can calculate that
$$\begin{aligned}
&\mathcal{P}^{2,1}(h)=\mathcal{H}^K(h)=\left[\begin{array}{cc}
    -12.0703 & 12.0703 \\
    -12.0703 & -12.0703
\end{array}\right],\\
&e^{A^Kh}=\left[\begin{array}{cc}
    -23.1407 & 0 \\
    0 & -23.1407
\end{array}\right],
\mathcal{B}^K(h)=\left[\begin{array}{c}
    -12.0703\\
    -12.0703
\end{array}\right],
\end{aligned}$$
$K=1,2$.
Verifying the controllability of system $(\bm{\Phi}_s,\bm{\Psi}_s)$ by the rank of its controllability matrix, one can draw the conclusion that $(\bm{\Phi}_s,\bm{\Psi}_s)$ is uncontrollable for $rank([\bm{\Psi}_s,\bm{\Phi}_s\bm{\Psi}_s,...,\bm{\Phi}_s^7\bm{\Psi}_s])=7$.
If the sampling period is decreased to $h'=\frac{\pi}{2}$, then
$$\begin{aligned}
&\mathcal{H}^K(h')=\mathcal{P}^{2,1}(h')=\left[\begin{array}{cc}
    1.9052 & 2.9052 \\
    -2.9052 & 1.9052
\end{array}\right],\\
&e^{A^Kh'}=\left[\begin{array}{cc}
    0 & 4.8105 \\
    -4.8105 & 0
\end{array}\right],
\mathcal{B}^K(h')=\left[\begin{array}{c}
    1.9052\\
    -2.9052
\end{array}\right],
\end{aligned}$$
$K=1,2.$
Re-verify the rank of the controllability matrix, the sampled-data system becomes controllable owing to the faster sampling.
However, if there are limitations of distance or costs on the transmission channels, how to ensure the controllability?
Now let the sampling rate on the control channels be $h'=\frac{\pi}{2}$ and the sampling rate on transmission channels still be $h=\pi$.
The multi-rate sampled-data system $(\tilde{\bm{\Phi}}_s,\tilde{\bm{\Psi}}_s)$ is still controllable for $rank([\tilde{\bm{\Psi}}_s,\tilde{\bm{\Phi}}_s\tilde{\bm{\Psi}}_s,$ ..., $\tilde{\bm{\Phi}}_s^7\tilde{\bm{\Psi}}_s])=8$.
\end{example}

\section{Conclusion}\label{sec:con}
The controllability of multilayer networked sampled-data systems is investigated.
The sampling is periodic on the transmission and control channels, with single-rate and multi-rate patterns both considered.
It is revealed that the multi-rate sampling pattern could have positive or negative effects on the controllability of the whole system.
Necessary or/and sufficient controllability conditions are developed, indicating that the controllability of multilayer networked sampled-data systems is jointly determined by the network topology, external control inputs, inter- and intra-layer couplings, node dynamics and the sampling periods.
Results show that the pathological sampling of single node systems can be eliminated by the network topology and inner couplings.
Owing to the inter-layer structure, the whole system can reach controllability even if the response layer is uncontrollable itself.
In further studies, systems with more complex sampling patterns will be considered, and the non-pathological sampling conditions of networked systems will be further explored.


\balance
\bibliographystyle{IEEEtran}
\bibliography{reference}

\end{document}